\documentclass[a4paper,UKenglish,cleveref, autoref, thm-restate]{lipics-v2021}
%This is a template for producing LIPIcs articles. 
%See lipics-v2021-authors-guidelines.pdf for further information.
%for A4 paper format use option "a4paper", for US-letter use option "letterpaper"
%for british hyphenation rules use option "UKenglish", for american hyphenation rules use option "USenglish"
%for section-numbered lemmas etc., use "numberwithinsect"
%for enabling cleveref support, use "cleveref"
%for enabling autoref support, use "autoref"
%for anonymousing the authors (e.g. for double-blind review), add "anonymous"
%for enabling thm-restate support, use "thm-restate"
%for enabling a two-column layout for the author/affilation part (only applicable for > 6 authors), use "authorcolumns"
%for producing a PDF according the PDF/A standard, add "pdfa"

%\graphicspath{{./graphics/}}%helpful if your graphic files are in another directory

\bibliographystyle{plainurl}% the mandatory bibstyle

\usepackage{amsmath,amssymb,amsthm,mathtools,comment}
\usepackage{enumerate}
\usepackage{fontawesome}
\usepackage{float}
\usepackage{graphicx}
\usepackage{todonotes}
\usepackage{complexity}
\graphicspath{ {figures/} }
\usepackage{tikz}
\usetikzlibrary{automata,positioning, decorations.pathreplacing, arrows, decorations.markings}

\def\vecsign{\mathchar"017E}
\def\dvecsign{\smash{\stackon[-1.95pt]{\vecsign}{\rotatebox{180}{$\vecsign$}}}}
\def\dvec#1{\def\useanchorwidth{T}\stackon[-4.2pt]{#1}{\,\dvecsign}}
\usepackage{stackengine}
\stackMath

\theoremstyle{definition}

\newtheorem*{lemma*}{Lemma}
\newtheorem*{theorem*}{Theorem}

\title{Reachability and Matching in Single Crossing Minor Free Graphs} %TODO Please add

%\titlerunning{Dummy short title} %TODO optional, please use if title is longer than one line

\author{Samir Datta}{Chennai Mathematical Institute, Chennai, India}{sdatta@cmi.ac.in}{}{Partially funded by a grant from Infosys foundation and
SERB-MATRICS grant MTR/2017/000480}%mandatory, please use full name; only 1 author per \author macro; first two parameters are mandatory, other parameters can be empty.
\author{Chetan Gupta}{Aalto University, Finland}{chgpt.09@gmail.com}{}{Supported by Academy of Finland, Grant 321901}%mandatory, please use full name; only 1 author per \author macro; first two parameters are mandatory, other parameters can be empty.
\author{Rahul Jain}{Fernuniversit\"at in Hagen, Germany}{rahul.jain@fernuni-hagen.de}{}{}%mandatory, please use full name; only 1 author per \author macro; first two parameters are mandatory, other parameters can be empty.
\author{Anish Mukherjee}{Institute of Informatics, University of Warsaw, Poland}{anish@mimuw.edu.pl}{}{Supported by the ERC CoG grant TUgbOAT no 772346}%mandatory, please use full name; only 1 author per \author macro; first two parameters are mandatory, other parameters can be empty.
\author{Vimal Raj Sharma}{Indian Institute of Technology, Kanpur, India}{vimalraj@cse.iitk.ac.in}{}{Ministry of Electronics and IT, India, VVS PhD program}%mandatory, please use full name; only 1 author per \author macro; first two parameters are mandatory, other parameters can be empty.
\author{Raghunath Tewari}{Indian Institute of Technology, Kanpur, India}{rtewari@cse.iitk.ac.in}{}{Young Faculty Research Fellowship, Ministry of Electronics and IT, India}%mandatory, please use full name; only 1 author per \author macro; first two parameters are mandatory, other parameters can be empty.

%\author{John Q. Public}{Dummy University Computing Laboratory, [optional: Address], Country \and My second affiliation, Country \and \url{http://www.myhomepage.edu} }{johnqpublic@dummyuni.org}{https://orcid.org/0000-0002-1825-0097}{(Optional) author-specific funding acknowledgements}%TODO mandatory, please use full name; only 1 author per \author macro; first two parameters are mandatory, other parameters can be empty. Please provide at least the name of the affiliation and the country. The full address is optional

%\author{Joan R. Public\footnote{Optional footnote, e.g. to mark corresponding author}}{Department of Informatics, Dummy College, [optional: Address], Country}{joanrpublic@dummycollege.org}{[orcid]}{[funding]}

\authorrunning{S. Datta, C. Gupta, R. Jain, A. Mukherjee, V.R. Sharma, R. Tewari} %TODO mandatory. First: Use abbreviated first/middle names. Second (only in severe cases): Use first author plus 'et al.'

\Copyright{Samir Datta, Chetan Gupta, Rahul Jain, Anish Mukherjee, Vimal Raj Sharma and Raghunath Tewari} %TODO mandatory, please use full first names. LIPIcs license is "CC-BY";  http://creativecommons.org/licenses/by/3.0/

\ccsdesc[100]{F.1.3 Complexity Measures and Classes}%TODO mandatory: Please choose ACM 2012 classifications from https://dl.acm.org/ccs/ccs_flat.cfm 

\keywords{Reachability, Matching, Logspace, Single-crossing minor free graphs} %TODO mandatory; please add comma-separated list of keywords

\category{} %optional, e.g. invited paper

%\relatedversion{Full Version}{https://arxiv.org/abs/2103.13940} %optional, e.g. full version hosted on arXiv, HAL, or other respository/website
%\relatedversiondetails[linktext={opt. text shown instead of the URL}, cite=DBLP:books/mk/GrayR93]{Classification (e.g. Full Version, Extended Version, Previous Version}{URL to related version} %linktext and cite are optional

%\supplement{}%optional, e.g. related research data, source code, ... hosted on a repository like zenodo, figshare, GitHub, ...
%\supplementdetails[linktext={opt. text shown instead of the URL}, cite=DBLP:books/mk/GrayR93, subcategory={Description, Subcategory}, swhid={Software Heritage Identifier}]{General Classification (e.g. Software, Dataset, Model, ...)}{URL to related version} %linktext, cite, and subcategory are optional

%\funding{(Optional) general funding statement \dots}%optional, to capture a funding statement, which applies to all authors. Please enter author specific funding statements as fifth argument of the \author macro.

%\acknowledgements{I want to thank \dots}%optional

\newclass{\ReachUL}{ReachUL}
\newclass{\coUL}{coUL}

\newcommand{\wund}{w^{\textrm{{\tiny{und}}}}}

\newcommand{\pty}{\textit{p-type }}
\newcommand{\cty}{\textit{c-type }}

\newlang{\NZCL}{NonZeroCircL}
\newlang{\NZCNC}{NonZeroCircNC}

\newclass{\Log}{L}
\newclass{\ACz}{AC^0}
\newclass{\TCz}{TC^0}
\newclass{\ACo}{AC^1}
\newclass{\ACzt}{AC^0[\oplus]}
\newclass{\FOar}{FO(\le,+,\times)}
\newclass{\FOpar}{FO[\oplus](\le,+,\times)}
\newclass{\DynACz}{DynAC^0}
\newclass{\DynTCz}{DynTC^0}
\newclass{\DynACzt}{DynAC^0[\oplus]}
\renewclass{\DynFO}{DynFO}
\newclass{\DynFOar}{DynFO(\le,+,\times)}
\newclass{\DynFOp}{DynFO[\oplus]}
\newclass{\DynFOpar}{DynFO[\oplus](\le,+,\times)}

\newlang{\PM}{PM}
\newlang{\BPM}{BPM}
\newlang{\PMD}{PMDecision}
\newlang{\BPMD}{BPMDecision}
\newlang{\PMS}{PMSearch}
\newlang{\BPMS}{BPMSearch}
\newlang{\BMWPMS}{MinWtBPMSearch}
\newlang{\MCM}{MM}
\newlang{\BMCM}{BMM}
\newlang{\BMCMD}{BMMDecision}
\newlang{\BMCMS}{BMMSearch}
\newlang{\MCMSz}{MMSize}
\newlang{\BMCMSz}{BMMSize}
\newlang{\MWMCM}{MinWtMM}
\newlang{\BMWMCM}{MinWtBMM}
\newlang{\BMWMCMS}{MinWtBMMSearch}
\newlang{\Reach}{Reach}
\newlang{\Dist}{Distance}
\newlang{\Rank}{Rank}

\nolinenumbers %uncomment to disable line numbering

\hideLIPIcs  %uncomment to remove references to LIPIcs series (logo, DOI, ...), e.g. when preparing a pre-final version to be uploaded to arXiv or another public repository

%Editor-only macros:: begin (do not touch as author)%%%%%%%%%%%%%%%%%%%%%%%%%%%%%%%%%%
\EventEditors{John Q. Open and Joan R. Access}
\EventNoEds{2}
\EventLongTitle{42nd Conference on Very Important Topics (CVIT 2016)}
\EventShortTitle{CVIT 2016}
\EventAcronym{CVIT}
\EventYear{2016}
\EventDate{December 24--27, 2016}
\EventLocation{Little Whinging, United Kingdom}
\EventLogo{}
\SeriesVolume{42}
\ArticleNo{23}
%%%%%%%%%%%%%%%%%%%%%%%%%%%%%%%%%%%%%%%%%%%%%%%%%%%%%%

\begin{document}

\maketitle

\begin{abstract}
We show that for each single crossing graph $H$, a polynomially bounded weight function for all $H$-minor free graphs $G$ can be constructed in logspace such that it gives nonzero weights to all the cycles in $G$. This class of graphs subsumes almost all classes of graphs for which such a weight function is known to be constructed in logspace. As a consequence, we obtain that for the class of $H$-minor free graphs where $H$ is a single crossing graph, reachability can be solved in $\UL$, and bipartite maximum matching can be solved in $\SPL$, which are small subclasses of the parallel complexity class $\NC$.  In the restrictive case of bipartite graphs,  our maximum matching result improves upon the recent result of Eppstein and Vazirani~\cite{EV21}, where they show an $\NC$ bound for constructing perfect matching in general single crossing minor free graphs.
\end{abstract}
\section{Introduction}
Directed graph reachability and perfect matching are two fundamental problems in computer science. The history of the two problems has been inextricably linked together from the inception of computer science (and before!) \cite{FordF56}. The problems and their variants, such as shortest path \cite{Dijkstra59} and maximum matching \cite{Edmonds65} have classically been studied in the sequential model of computation. Since the 1980s, considerable efforts have been spent trying to find parallel algorithms for matching problems spurred on by the connection to reachability which is, of course, parallelizable.
The effort succeeded only in part with the discovery of randomized parallel algorithms \cite{KUW85,MulmuleyVV87}. While we know that the reachability problem is complete for the complexity class $\NL$, precise characterization has proved to be elusive for matching problems. The 1990s saw attempts in this
direction when surprisingly ``small'' upper bounds were proved \cite{ARZ99}
for the perfect matching problem, although in the non-uniform setting.
At roughly the same time, parallel algorithms for various
versions of the matching problem for restricted graph classes like planar
\cite{MN95} and bounded genus \cite{MV00} graphs were
discovered. The last two decades have seen efforts towards pinning down the
exact
parallel complexity of reachability and matching related problems in restricted
graph classes \cite{BTV09,KV10,DKR10,DKTV11,AGGT16,KT16,GST19,GST20}.
Most of these papers are based on the method of constructing \emph{nonzero circulations}.

The circulation of a simple cycle is the sum of its edge-weights in a
fixed orientation (see Section~\ref{sec:prelims} for the definition)
and
we wish to assign polynomially bounded weights to the edges of a graph, such
that every simple cycle has a nonzero circulation.
Assigning such weights \emph{isolates} a reachability witness or a matching witness in the graph \cite{TV12}. Constructing
polynomially bounded isolating weight function in parallel for general graphs has been
elusive so far.
The last five years
have seen rapid progress in the realm of matching problems, starting with
\cite{FGT} which showed that the method of nonzero circulations could
be extended from topologically restricted (bipartite) graphs to general
(bipartite) graphs. A subsequent result extended this to all graphs
\cite{ST17}. More recently, the endeavour to parallelize planar
perfect matching has borne fruit \cite{Sankowski18,AV20} and has been followed up by further exciting work \cite{AV2}. 

We know that polynomially bounded weight functions that give nonzero circulation to every cycle can be constructed in logspace for planar graphs, bounded genus graphs and bounded treewidth graphs \cite{BTV09,DKTV11,DKMTVZ20} . Planar graphs are both $K_{3,3}$-free and $K_5$-free graphs. Such a weight function is also known to be constructable in logspace for $K_{3,3}$-free graphs and $K_5$-free graphs, individually \cite{AGGT16}. A natural question arises if we can construct such a weight function for $H$-minor-free graphs for any arbitrary graph $H$. A major hurdle in this direction is the absence of a space-efficient (Logspace) or parallel algorithm (\NC) for finding a structural decomposition of $H$-minor free graphs. However, such a decomposition is known when $H$ is a single crossing graph. This induces us to solve the problem for single crossing minor-free (SCM-free) graphs. An SCM-free graph can be decomposed into planar and bounded treewidth graphs. Moreover, $K_{3,3}$ and $K_5$ are single crossing graphs. Hence our result can also be seen as a generalization of the previous results on these classes. There have also been important follow-up works on parallel algorithms for SCM-free graphs \cite{EV21}. SCM-free graphs have been studied in several algorithmic works (for example \cite{STW16,CE13,DHNRT04}).

\subsection{Our Result}
In this paper, we show that results for previously studied graph classes
(planar, constant tree-width and $H$-minor free for $H \in \{K_{3,3},K_5\}$)
can be extended and unified to yield similar results for SCM-free graphs.
\begin{theorem}
\label{thm:main}
There is a logspace algorithm for computing polynomially-bounded, skew-symmetric nonzero circulation weight function in SCM-free graphs.
\end{theorem}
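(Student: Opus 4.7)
The plan is to follow the template used for $K_{3,3}$-free and $K_5$-free graphs \cite{AGGT16}, for planar graphs \cite{BTV09}, and for bounded tree-width graphs \cite{DKTV11,DKMTVZ20}: decompose the input SCM-free graph $G$ into its building blocks, compute a nonzero-circulation weight function on each block using an existing logspace construction, and then combine these local weight functions into a single weight function on $G$ that gives nonzero circulation to every simple cycle. First, I would compute a clique-sum decomposition of $G$ into pieces $G_1,\ldots,G_m$ indexed by a tree $T$, where each $G_i$ is either planar or has bounded tree-width and adjacent pieces in $T$ share a clique of bounded size; this step is logspace thanks to the structural theorem for SCM-free graphs alluded to in the introduction. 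Next, invoking \cite{BTV09} on the planar pieces and \cite{DKMTVZ20} on the bounded-tree-width pieces, I obtain skew-symmetric polynomially-bounded weight functions $w_i$ on $G_i$ that give nonzero circulation to every simple cycle of $G_i$.

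The combined weight on an edge $e \in E(G_i)$ would be defined in the shape $w(e) = \alpha_i\cdot w_i(e) + \beta(e)$, where the multipliers $\alpha_i$ achieve scale separation across pieces and $\beta$ is a small correction designed to handle shortcut paths along the shared cliques. The critical case is a cycle $C$ that traverses several pieces. Splitting $C$ at the shared cliques produces arcs, one per visited piece, each closable by a short path on the edges of the shared clique. Replacing each arc by this closing path yields, inside each $G_i$, a closed walk $\widehat C_i$, and the identity $w(C) = \sum_i \alpha_i\cdot w_i(\widehat C_i) + (\text{clique-correction terms})$ together with scale-separation would force every $w_i(\widehat C_i)$ to vanish if $w(C)=0$; the isolation property of $w_i$ then forces each $\widehat C_i$ to be trivial, and a short combinatorial argument reconstructs $C$ from the $\widehat C_i$ and deduces that $C$ itself must be trivial, a contradiction.

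The part I expect to be hardest is keeping $w$ polynomially bounded even though the decomposition tree $T$ can have polynomially many nodes. A naive exponential scale cascade $\alpha_i = N^i$ would blow up beyond polynomial size, so I plan to colour the nodes of $T$ with a constant number of colours (for example by a heavy-path or centroid-style decomposition) so that adjacent pieces receive distinct scales while only a constant number of distinct scales are used overall, following the strategy employed for the three-connected-component tree in \cite{AGGT16}. A secondary subtlety is ensuring that skew-symmetry is preserved under the combination; this is straightforward since all local $w_i$ are already skew-symmetric and the clique-correction terms can be defined on the oriented clique edges in a skew-symmetric fashion. Verifying logspace computability of the final construction is then routine once each sub-step\,---\,the decomposition, the per-piece weighting, and the tree-indexing\,---\,is logspace and composes in the standard way.
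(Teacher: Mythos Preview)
Your plan has two genuine gaps.

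\textbf{The constant-number-of-scales idea does not work and is not what \cite{AGGT16} does.} In \cite{AGGT16} (and in this paper) the auxiliary tree obtained by centroid decomposition has $O(\log n)$ \emph{levels}, and the scaling factor attached to a piece is $K^{h}$ where $h$ is its height in that tree; thus there are $\Theta(\log n)$ distinct scales, not $O(1)$. This is essential: the domination argument (here Lemma~\ref{lem:domwt}) needs that the unique highest piece visited by a cycle carries a strictly larger scale than the \emph{sum} of the contributions from all lower pieces. With only a constant palette of scales, a cycle can visit arbitrarily many pieces sharing the same $\alpha$, and their contributions $\sum_{i:\text{same scale}} w_i(\widehat C_i)$ can cancel without any individual term vanishing. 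Colouring so that adjacent pieces get different colours does not help, since a long path in $T$ still hits each colour many times.

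\textbf{The ``close the arc and invoke isolation on $\widehat C_i$'' step fails.} A simple cycle $C$ in $G$ may enter and leave a piece $G_i$ several times through different separating sets, so the object $\widehat C_i$ you form by closing arcs along clique edges is in general a closed \emph{walk}, not a simple cycle. The nonzero-circulation property of $w_i$ is only promised for simple cycles; a closed walk can have zero $w_i$-circulation while being far from trivial (e.g.\ a sum of two simple cycles with opposite circulations). So ``$w_i(\widehat C_i)=0\Rightarrow \widehat C_i$ trivial'' is unjustified, and the final ``short combinatorial argument'' reconstructing $C$ has no foundation. The paper avoids this pitfall entirely: it never closes arcs. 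Instead it (a) modifies the component tree so that separating sets are pairwise disjoint and shared by exactly two components, (b) further breaks each bounded-treewidth piece into constant-\emph{size} bags, producing a refined tree $T'$, (c) passes to an auxiliary graph $G'$ in which every simple cycle of $G$ maps to a simple cycle (Lemma~\ref{lem:wtforg} and Claim~\ref{clm:simp}), and (d) designs a second weight $w_2$ via \emph{face weights} near separating sets so that the edges of $C$ associated with the highest bag already contribute a provably positive amount, dominating everything below (Lemmas~\ref{lem:maxval}--\ref{lem:nonzerowt}). Treating bounded-treewidth pieces as black boxes with the weight of \cite{DKMTVZ20} and then combining by scale separation is exactly the approach the paper explains does \emph{not} go through directly; the $G'$ pull-back and the bespoke $w_2$ are the missing ideas.
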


An efficient solution to the circulation problem for a class of graphs yields better complexity bounds for determining reachability in the directed version of that class and constructing minimum weight maximum-matching in the bipartite version of that class. Theorem~\ref{thm:main}
with the results of \cite{DKKM18,RA00}, yields the following:
\begin{corollary}\label{cor:stat}
For SCM-free graphs, reachability is in $\UL\cap\coUL$ and minimum weight bipartite maximum matching is in $\SPL$.
\end{corollary}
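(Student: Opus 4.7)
The plan is to plug the weight function from Theorem~\ref{thm:main} into two existing reductions: the Reinhardt--Allender framework of \cite{RA00}, which takes a logspace-computable min-unique weight function and yields a $\UL\cap\coUL$ algorithm for $\Reach$, and the $\SPL$ construction of \cite{DKKM18}, which takes a logspace-computable isolating weight function and produces minimum-weight bipartite maximum matching in $\SPL$. The entire corollary thus reduces to verifying that the nonzero-circulation property of Theorem~\ref{thm:main} implies (i) uniqueness of the minimum-weight $s$-$t$ path in any directed SCM-free graph and (ii) uniqueness of the minimum-weight maximum matching in any bipartite SCM-free graph.

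For (i), I would suppose two distinct minimum-weight $s$-$t$ directed paths $P$ and $Q$. The symmetric difference $P\oplus Q$, oriented along $P$ on its $P$-edges and reversed along $Q$ on its $Q$-edges, decomposes into edge-disjoint simple directed cycles, and by skew-symmetry the sum of their circulations equals $w(P)-w(Q)=0$. A cycle-swap argument of the kind used throughout \cite{BTV09,DKTV11,AGGT16} then promotes this to the stronger statement that each individual cycle in the decomposition has zero circulation, contradicting Theorem~\ref{thm:main}. Once min-uniqueness is established, Reinhardt--Allender's inductive counting directly supplies the $\UL\cap\coUL$ algorithm for $\Reach$.

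For (ii), I would assume two distinct minimum-weight maximum matchings $M_1,M_2$ in the bipartite SCM-free graph. Their symmetric difference decomposes into alternating even cycles (plus alternating paths of equal $M_1$- and $M_2$-edge counts in the non-perfect case, which must carry no net weight by maximality of both matchings). For any alternating cycle $C\subseteq M_1\oplus M_2$, swapping the two sides of $C$ in $M_1$ yields another maximum matching; minimality of both $M_1$ and $M_2$ then forces the alternating sum of weights around $C$ to vanish, but this sum is exactly the circulation of $C$ and is nonzero by Theorem~\ref{thm:main}. Hence the minimum-weight maximum matching is unique, and the isolating construction of \cite{DKKM18} outputs it in $\SPL$.

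The only step that is not pure bookkeeping is the cycle-swap argument of (i) that promotes ``sum of circulations is zero'' to ``each circulation is zero'' --- the corresponding step in (ii) is clean because the two sides of an alternating cycle can be swapped \emph{independently}, preserving a maximum matching, whereas swapping a single cycle in a directed path decomposition requires a more delicate combinatorial argument to produce a valid competing $s$-$t$ path. This is precisely the standard reduction from nonzero circulation to min-unique paths used in all prior works in this line, so once Theorem~\ref{thm:main} is in hand, it transfers to SCM-free graphs without modification and the corollary follows.
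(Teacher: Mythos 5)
Your top-level plan --- feed the weight function of Theorem~\ref{thm:main} into the Reinhardt--Allender framework of \cite{RA00} for reachability and into the construction of \cite{DKKM18} for bipartite matching --- is exactly what the paper does; the paper gives no proof of the corollary beyond these two citations (together with the remark, via \cite{TV12}, that nonzero circulation isolates paths and \emph{perfect} matchings). Your justification of part (i) is the standard first-divergence exchange argument in slightly rearranged form, and it is fine.

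Part (ii), however, contains a genuine gap. You claim that nonzero circulation on all cycles forces the minimum-weight \emph{maximum} matching to be unique. This is false when the maximum matching is not perfect: the symmetric difference $M_1\oplus M_2$ of two maximum matchings may consist entirely of alternating paths and contain no cycle at all, so the circulation hypothesis says nothing. Concretely, in the path $a$--$b$--$c$ (bipartite, planar, hence SCM-free, and acyclic), both $\{ab\}$ and $\{bc\}$ are maximum matchings, and any weight function giving them equal weight leaves two minimum-weight maximum matchings. You half-notice the issue when you write that the path components ``carry no net weight,'' but that observation is precisely what blocks the conclusion: two minimum-weight maximum matchings can legitimately differ along such a path, so no contradiction arises. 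Breaking these ties is the actual content of \cite{DKKM18}, whose reduction augments the graph and weights so that isolation among maximum (not merely perfect) matchings is obtained from a nonzero-circulation function. The corollary therefore does follow, but only by invoking \cite{DKKM18} as a black box, as the paper does --- not by the uniqueness argument you sketch, which as written would fail.
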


Also using the result of \cite{TW10}, we obtain that the \textit{Shortest path} problem in SCM-free graphs can be solved in $\UL\cap \coUL$.  
\subparagraph*{Overview of Our Techniques and Comparison With Previous Results:} We know that for planar graphs and constant treewidth graphs nonzero circulation weights can be constructed in logspace \cite{BTV09,DKMTVZ20}. We combine these weight functions using the techniques from Arora et al. \cite{AGGT16}, Datta et al. \cite{DKKM18} and, Datta et al. \cite{DKMTVZ20} together with some modifications to obtain the desired weight function. In \cite{AGGT16}, the authors decompose the given input graph $G$ ($K_{3,3}$-free or $K_5$-free) and obtain a component tree that contains planar and constant size components. They modify the components of the component tree so that they satisfy few properties which they use for constructing nonzero circulation weights (these properties are mentioned at the beginning of Section \ref{sec:wtfn}). The new graph represented by these modified components preserves the perfect matchings of $G$. Then, they construct a \emph{working-tree} of height $O(\log n)$ corresponding to this component tree and use it to assign nonzero circulation weights to the edges of this new graph. The value of the weights assigned to the edges of the new graph is exponential in the height of the working tree.

While $K_{3,3}$-free and $K_5$-free graphs can be decomposed into planar and constant size components, an SCM-free graph can be decomposed into planar and constant treewidth components. Thus the component tree of the SCM-free graph would have several non-planar constant treewidth components. While we can construct a working tree of height $O(\log n)$, this tree would contain constant-treewidth components and hence make it difficult to find nonzero circulation weights. A na\"ive idea would be to replace each constant treewidth component with its tree decomposition in the working tree. However, the resultant tree would have the height $O(\log^2 n)$. Thus the weight function obtained in this way is of $O(\log^2 n)$-bit. We circumvent this problem as follows: we obtain a component tree $T$ of the given SCM-free graph $G$ and modify its components to satisfy the same property as \cite{AGGT16} (however, we use different gadgets for modification). Now we replace each bounded treewidth component with its tree decomposition in $T$. Using this new component tree, say $T'$, we define another graph $G'$. We use the technique from \cite{DKKM18} to show that if we can construct the nonzero circulation for $G'$, then we can \textit{pull back} nonzero circulation for $G$. Few points to note here: (i) pull back technique works because of the new gadget that we use to modify the components in $T$, (ii) since ultimately we can obtain nonzero circulation for $G$, it allows us to compute maximum matching in $G$ in $\SPL$, which is not the case in \cite{AGGT16}.

\subsection{Organization of the Paper}
After introducing the definitions and preliminaries in Section~\ref{sec:prelims}, in Section~\ref{sec:wtfn} we discuss the weight function that achieves non-zero circulation in single-crossing minor free graphs and its application to maximum matching in Section~\ref{sec:staticmatch}. Finally, we conclude with Section~\ref{sec:concl}.

\section{Preliminaries and Notations}
\label{sec:prelims}
\subparagraph*{Tree decomposition:} Tree decomposition is a well-studied concept in graph theory. Tree decomposition of a graph, in some sense, reveals the information of how much tree-like the graph is. We use the following definition of tree decomposition.

\begin{definition}
\label{def:treedec}
Let $G(V,E)$ be a graph and $\tilde{T}$ be a tree, where nodes of the $\tilde{T}$ are $\{B_1, \ldots ,B_k \mid B_i \subseteq V\}$ (called bags). $T$ is called a tree decomposition of $G$ if the following three properties are satisfied:
\begin{itemize}

\item $B_1 \cup \ldots \cup B_k = V$,
\item for every edge $(u,v) \in E$, there exists a bag $B_i$ which contains both the vertices $u$ and $v$,
\item for a vertex $v \in V$, the bags which contain the vertex $v$ form a connected component in $\tilde{T}$.
\end{itemize}

\end{definition}

The width of a tree decomposition is defined as one less than the size of the largest bag. The treewidth of a graph $G$ is the minimum width among all possible tree decompositions of $G$. Given a constant treewidth graph $G$, we can find its tree decomposition $\tilde{T}$ in logspace such that $\tilde{T}$ has a constant width \cite{EJT10}.

\begin{lemma}[\cite{EJT10}]
\label{lem:boundtw}
For every constant $c$, there is a logspace algorithm that takes a graph as input and outputs its tree decomposition of treewidth at most $c$, if such a decomposition exists.
\end{lemma}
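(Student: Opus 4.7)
The plan is to prove this via a divide-and-conquer algorithm that exploits the standard fact that every graph of treewidth at most $c$ admits a balanced separator of size at most $c+1$: a set $S \subseteq V$ with $|S| \le c+1$ such that every connected component of $G \setminus S$ has at most $|V|/2$ vertices. The recursive procedure takes as input $G$ together with an ``interface'' $X \subseteq V$ of size at most $c+1$ (initially empty); it enumerates all candidate separators $S \subseteq V$ with $|S| \le c+1$, tests the balanced-separator condition for $S \cup X$, recurses on each component $C$ of $G \setminus S$ with the new interface $(S \cup X) \cap N[C]$, and glues the resulting subtrees at a root bag $S \cup X$.

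First I would observe that each local operation is in logspace: since $c$ is constant, enumerating a $c{+}1$-size subset uses $O(\log n)$ space, and testing whether $S$ is a balanced separator reduces to counting and to undirected reachability in $G \setminus S$, both logspace. Next I would argue that the recursion depth is $O(\log n)$, because each balanced separator halves the vertex count of the sub-instance. The naive simulation of an $O(\log n)$-depth recursion would, however, use $O(\log^2 n)$ space, so the crucial step is to never materialise any intermediate decomposition. Instead I would represent the ``current position'' in the recursion tree by a path of $O(\log n)$ choice indicators (which separator was chosen at each level and which child component we descended into), and whenever the algorithm is asked to emit a particular bag of the output, it recomputes the chain of separator choices from the root down to that bag on the fly, using only $O(1)$ additional workspace per recursion level.

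The hardest part will be implementing the joint bookkeeping of the separator search together with the interface inheritance so that the connectivity condition of Definition~\ref{def:treedec} is preserved across bags produced at different points in the recursion. Two subtleties arise: (i) when reconstructing a bag we must know exactly which inherited vertices of $X$ the parent has attached, which requires reproducing the parent's separator choice as well; and (ii) the separator-search subroutine must itself be carried out on a sub-instance that is described only implicitly by the current path. I would resolve (i) by making the interface $X$ a deterministic function of the path (the lexicographically first balanced separator at each level, restricted to the relevant component), and (ii) by appealing to the Elberfeld--Jakoby--Tantau framework, whose technical core is precisely a logspace mechanism for evaluating first-order (indeed MSO) properties along such implicitly described recursive decompositions of bounded-width graphs.

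Correctness follows from the separator property: if $G$ has treewidth at most $c$ then a balanced $(c{+}1)$-separator exists at every recursive call, so the procedure terminates with a valid tree decomposition of width at most $c$; conversely, failure to find such a separator at some level certifies that $G$ has treewidth exceeding $c$, in which case the algorithm rejects. Together with the space analysis above, this gives the claimed logspace upper bound.
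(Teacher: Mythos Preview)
The paper does not prove this lemma; it is quoted verbatim from \cite{EJT10} and used as a black box. There is therefore no in-paper proof to compare against.

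Regarding your proposal on its own merits, the high-level shape (balanced separators, logarithmic recursion depth, recompute-rather-than-store) is indeed the skeleton of the Elberfeld--Jakoby--Tantau argument, but two genuine gaps remain. First, your handling of point (ii) is circular: you resolve the hardest step by ``appealing to the Elberfeld--Jakoby--Tantau framework'', yet that framework \emph{is} the content of \cite{EJT10}, which is exactly the lemma under discussion. Second, the width bound does not come out right. With an inherited interface $X$ of size up to $c{+}1$ and a fresh separator $S$ of size up to $c{+}1$, the root bag $S\cup X$ can have $2c{+}2$ vertices, so your construction yields width at most $2c{+}1$ rather than $c$; the lemma as stated promises the \emph{optimal} width whenever a width-$c$ decomposition exists. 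A related looseness is the space accounting: storing ``which separator was chosen'' at each of $O(\log n)$ levels is already $\Theta(\log^2 n)$ bits if separators are recorded explicitly, and saying the choice is ``the lexicographically first balanced separator'' only shifts the burden to recomputing that separator from an implicitly described sub-instance, which is precisely the nontrivial part of \cite{EJT10} that you are invoking rather than proving.
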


\begin{definition}
\label{def:cls}
Let $G_1$ and $G_2$ be two graphs containing cliques of equal size. Then the \emph{clique-sum} of $G_1$ and $G_2$ is formed from their disjoint union by identifying pairs of vertices in these two cliques to form a single shared clique, and then possibly deleting some of the clique edges.
\end{definition}
For a constant $k$, a $k$-clique-sum is a clique-sum in which both cliques have at most $k$ vertices. One may also form clique-sums of more than two graphs by repeated application of the two-graph clique-sum operation. For a constant $W$, we use the notation $ \langle \mathcal{G}_{P, W}\rangle_k$ to denote the class of graphs that can be obtained by taking repetitive $k$-clique-sum of planar graphs and graphs of treewidth at most $W$. In this paper, we construct a polynomially bounded skew-symmetric weight function that gives nonzero circulation to all the cycles in a graph $G \in \langle \mathcal{G}_{P, W}\rangle_3$. Note that if a weight function gives nonzero circulations to all the cycles in the biconnected components of $G$, it will give nonzero circulation to all the cycles in $G$ because no simple cycle can be a part of two different biconnected components of $G$. We can find all the biconnected components of $G$ in logspace by finding all the articulation points. Therefore, without loss of generality, assume that $G$ is biconnected.

The \emph{crossing number} of a graph $G$ is the lowest number of edge crossings of a plane drawing of $G$. A \emph{single-crossing} graph is a graph whose crossing number is at most 1. SCM-free graphs are graphs that do not contain $H$ as a minor, where $H$ is a fixed single crossing graph. Robertson and Seymour have given the following characterization of SCM-free graphs.

\begin{theorem}[\cite{RS91}]
\label{thm:rs}
For any single-crossing graph $H$, there is an integer
$c_H \geq 4$ (depending only on $H$) such that every graph with no minor isomorphic to $H$ can
be obtained as $3$-clique-sum of planar graphs and graphs of treewidth at most $c_H$.
\end{theorem}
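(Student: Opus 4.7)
The plan is to combine the excluded planar minor theorem of Robertson and Seymour with a structural analysis of how a single-crossing graph can arise as a minor. First I would dispose of the trivial case: if $H$ itself is planar, the excluded planar minor theorem already supplies a constant $c_H$ such that every $H$-minor-free graph has treewidth at most $c_H$, and the claimed decomposition degenerates to a single bag. Henceforth I would assume $H$ is non-planar with a drawing in the plane that has exactly one crossing, formed by two edges $e_1, e_2$, and note that both $H - e_1$ and $H - e_2$ are planar.

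Next I would reduce to the $3$-connected case via a Tutte-style decomposition. Any graph $G$ can be written as a $2$-clique-sum (hence, a fortiori, as a $3$-clique-sum) of its $3$-connected components together with cycles and bonds; cycles and bonds have treewidth at most $2$, so it suffices to prove the statement for each $3$-connected piece $G'$ separately. If $G'$ is planar, it is placed on the planar side of the decomposition. Otherwise the goal reduces to showing that any $3$-connected, non-planar, $H$-minor-free graph has treewidth bounded by a constant depending only on $H$; alternatively, since a single-crossing graph embeds on the projective plane (route one crossing edge through the crosscap), one could also embed this step into the general Robertson--Seymour structure theorem for surfaces, but the direct route below is more elementary.

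The heart of the argument, and the main obstacle, is this last step. I would argue by contrapositive. If $G'$ has sufficiently large treewidth as a function of $|H|$, then by the excluded grid theorem $G'$ contains a large grid minor. Since $G'$ is $3$-connected and non-planar, it also contains a Kuratowski-type subdivision (a $K_5$ or $K_{3,3}$), which, by $3$-connectivity, can be linked to the grid through three internally disjoint paths. Using the grid to realize the planar subgraph $H - e_1$ as a minor, and then routing the remaining edge $e_1$ through the attached non-planar obstruction (which behaves as a ``crossing gadget''), one produces $H$ itself as a minor of $G'$, contradicting $H$-minor-freeness. Quantifying how large the grid must be and verifying that the routing can be performed in a $3$-connected setting is where the heavy Robertson--Seymour linkage and grid-minor machinery is unavoidable; I expect this to be the most technically involved step. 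Once the bound on the treewidth of non-planar $3$-connected pieces is in hand, reassembling them via the Tutte decomposition yields the desired $3$-clique-sum representation of $G$ as a sum of planar graphs and graphs of treewidth at most $c_H$.
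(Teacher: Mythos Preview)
The paper does not prove Theorem~\ref{thm:rs}; it is quoted from Robertson and Seymour and invoked as a black box, so there is no proof in the paper against which to compare your sketch.

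That said, your outline is a plausible high-level strategy and you correctly locate the crux. Two remarks. First, your reduction via Tutte's decomposition uses only $\le 2$-clique-sums, so if your central claim---that every $3$-connected non-planar $H$-minor-free graph has treewidth bounded in terms of $H$---goes through, you would actually obtain a statement with $2$-sums rather than the $3$-sums in the cited theorem. For $H \in \{K_5, K_{3,3}\}$ that claim is classical (the $3$-connected non-planar pieces are exactly $V_8$, respectively $K_5$); for arbitrary single-crossing $H$ it is essentially equivalent to the theorem itself. Second, the routing argument you sketch---realising the planar graph $H - e_1$ inside a large grid minor and threading the crossing edge $e_1$ through a Kuratowski obstruction linked to the grid via three disjoint paths---is exactly the step where the real Robertson--Seymour linkage and grid-minor machinery must be deployed. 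Making that step rigorous (in particular, arranging that the branch sets for $H-e_1$ in the grid line up with the two endpoints that must be connected through the non-planar part) is the whole difficulty, and your sketch does not yet supply it; you rightly flag this as the main obstacle.
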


\subparagraph*{Component Tree:} In order to construct the desired weight function for a graph $G \in \langle \mathcal{G}_{P, W}\rangle_3$, we decompose $G$ into smaller graphs and obtain a component tree of $G$ defined as follows: we first find $3$-connected and $4$-connected components of $G$ such that each of these components is either planar or of constant treewidth. We know that these components can be obtained in logspace \cite{TW09}. Since $G$ can be formed by taking repetitive $3$-clique-sum of these components, the set of vertices involved in a clique-sum is called a separating set. Using these components and separating sets, we define a component tree of $G$. A component tree $T$ of $G$ is a tree such that each node of $T$ contains a $3$-connected or $4$-connected component of $G$, i.e., each node contains either a planar or constant treewidth subgraph of $G$. There is an edge between two nodes of $T$ if the corresponding components are involved in a clique-sum operation. If two nodes are involved in a clique-sum operation, then copies of all the vertices of the clique are present in both components. It is easy to see that $T$ will always be a tree. Within a component, there are two types of edges present, \textit{real} and \textit{virtual edges}. Real edges are those edges that are present in $G$. Let $\{a,b,c\}$(or $\{a,b\}$) be a separating triplet(or pair) shared by two nodes of $T$, then there is a clique $\{a,b,c\}$ (or $\{a,b\}$) of virtual edges present in both the components. Suppose there is an edge present in $G$ between any pair of vertices of a separating set. In that case, there is a real edge present between that pair of vertices parallel to the virtual edge, in exactly one of the components which share that separating set.

\subparagraph*{Weight function and circulation:} Let $G(V, E)$ be an undirected graph with vertex set $V$ and edge set $E$. By $\dvec{E}$, we denote the set of bidirected edges corresponding to $E$. Similarly, by $G(V, \dvec{E})$, we denote the graph corresponding to $G(V, E)$ where each of its edges is replaced by a corresponding bidirected edge. A weight function $w : \dvec{E} \rightarrow \mathbb{Z}$ is called skew-symmetric if for all $e\in \dvec{E}$, $w(e) = -w(e^r)$ (where $e^r$ represent the edge with its direction reversed). We know that if $w$ gives nonzero circulation to every cycle that consists of edges of $\dvec{E}$ then it isolates a directed path between each pair of vertices in $G(V,\dvec{E})$. Also, if $G$ is a bipartite graph, then the weight function $w$ can be used to construct a weight function $\wund : E \rightarrow \mathbb{Z}$ that isolates a perfect matching in $G$ \cite{TV12}.

%\textbf{(So single crossing minor free $\subseteq \langle G_{P,W} \rangle_3$. Is the other direction true?)}\\
A convention is to represent by $\left<w_1,\ldots,w_k\right>$ the weight function that on edge $e$ takes the weight $\sum_{i=1}^k{w_i(e)B^{k-i}}$ where $w_1,\ldots,w_k$ are weight functions such that $\max_{i=1}^k{(nw_i(e))} \leq B$.

\subparagraph*{Complexity Classes:} The complexity classes \Log\ and \NL\ are the classes of languages accepted by deterministic and non-deterministic logspace Turing machines, respectively. $\UL$ is a class of languages that can be accepted by an $\NL$ machine that has at most one accepting path on each input, and hence $\UL \subseteq \NL$. $\SPL$ is the class of languages whose characteristic function can be written as a logspace computable integer determinant. 
%$\NC^i$ is the class of problems that can be solved using a polynomial-size circuit of $O(\log^i n)$ depth and $\NC\ = \cup_i \NC^i$.
%
%(with AND, OR, NOT gates), i.e., class of polynomial-size circuits with poly-logarithmic depth.
%
%For a non-deterministic Turing machine $M$, let $acc_M(x)$ and $rej_M(x)$ denote the number of accepting and rejecting computations respectively, on an input $x$. Denote $gap_M(x) = acc_M(x) - rej_M(x)$. \Gl\ is the class of functions $f(x)$ such that for some \NL\ machine $M$, $f(x) = gap_M(x)$. A language $L$ is in \SPL\ if so that for all inputs $x, gap_M(x) \in \{0,1\}$ and $x \in L$ if and only if $gap_M(x) =1$.	
\section{Weight function}
\label{sec:wtfn}
In order to construct the desired weight function for a given graph $G_0 \in \langle \mathcal{G}_{P, W}\rangle_3$, we modify the component tree $T_0$ of $G_0$ such that it has the following properties.
\begin{itemize}
\item No two separating sets share a common vertex. \item A separating set is shared by at most two components.
\item any virtual triangle, i.e., the triangle consists of virtual edge, in a planar component is always a face.
\end{itemize}

Let $T$ be this modified component tree, and $G$ be the graph represented by $T$. We show that if we have a weight function that gives nonzero circulation to every cycle in $G$, then we can obtain a weight function that will give nonzero circulation to all the cycles in $G_0$.
Arora et al. \cite{AGGT16} showed how a component tree satisfying these properties can be obtained for $K_{3,3}$-free and $K_5$-free graphs. We give a similar construction below and show that we can modify the components of $T_0$ such that $T$ satisfies the above properties (see Section \ref{sec:modt}). Note that if the graphs inside two nodes of $T_0$ share a separating set $\tau$ and they both are constant tree-width graphs, then we can take the clique-sum of these two graphs on the vertices of $\tau$, and the resulting graph will also be a constant tree-width graph. Therefore, we can assume that if two components share a separating set, then either both of them are planar, or one of them is planar and the other is of constant tree-width.

\subsection{Modifying the Component Tree}
\label{sec:modt}
In this section, we show that how we obtain the component tree $T$ from $T_0$ so that it satisfies the above three properties.
\subparagraph{(i) No two separating sets share a common vertex:}
For a node $D$ in $T_0$, let $G_D$ be the graph inside node $D$. Assume that $G_D$ contains a vertex $v$ which is shared by separating sets $\tau_1,\tau_2, \ldots ,\tau_k$, where $k>1$, present in $G_D$. We replace the vertex $v$ with a gadget $\gamma$ defined as follows: $\gamma $ is a star graph such that $v$ is the center node and $v_1,,v_2, \ldots ,v_k$ are the leaf nodes of $\gamma$. The edges which were incident on $v$ and had their other endpoints in $\tau_i$, will now incident on $v_i$ for all $i \in[k]$. All the other edges which were incident on $v$ will continue to be incident on $v$. We do this for each vertex which is shared by more than one separating set in $G_{D}$. Let $G_{D'}$ be the graph obtained after replacing each such vertex with gadget $\gamma$. It is easy to see that if $G_D$ was a planar component, then $G_{D'}$ will also be a planar component. We show that the same holds for constant tree-width components as well.

\begin{figure}
\begin{center}
\includegraphics[scale=1]{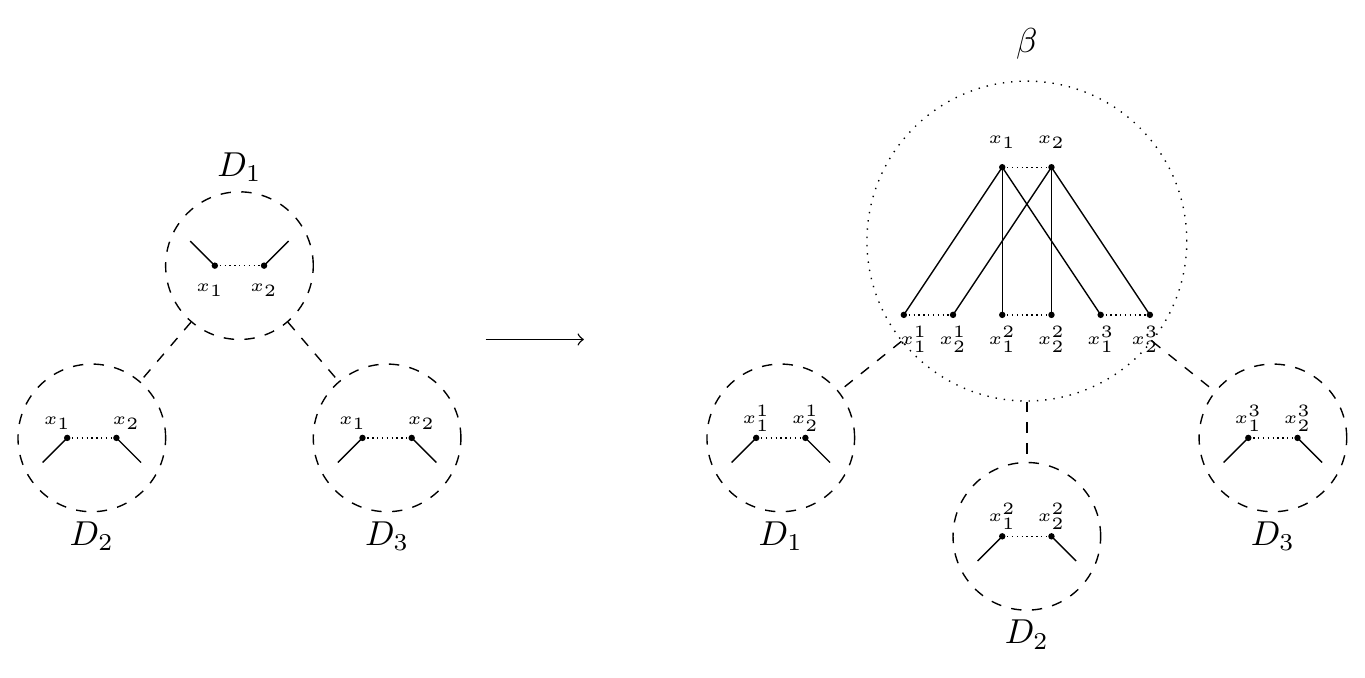}
%\caption{Surface separating cycle $C$}
\caption{(Left)A separating set $\{x_1,x_2\}$ is shared by components $D_1,D_2$ and $D_3$. (Right) Replace them by adding the gadget $\beta$ and connect $D_1,D_2$ and $D_3$ to $\beta$.}
\label{fig:sepset}
\end{center}
\end{figure}

\begin{claim}
If $G_D$ is a constant treewidth graph, then $G_{D'}$ will also be of constant treewidth.
\end{claim}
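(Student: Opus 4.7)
The plan is to build a tree decomposition of $G_{D'}$ of bounded width by augmenting a tree decomposition $\tilde{T}$ of $G_D$, which exists of width at most some constant $W$ by hypothesis and Lemma~\ref{lem:boundtw}. I would process the vertices of $G_D$ that are shared by multiple separating sets one at a time, in any order, and show that each step increases the width by at most a constant and adds only a constant overhead per processed vertex.

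Fix such a vertex $v$ lying in separating sets $\tau_1,\ldots,\tau_k$, which the construction replaces by the star gadget $\gamma$ with new leaves $v_1,\ldots,v_k$. For each $i\in[k]$, the separating set $\tau_i$ is a clique in the current graph of size at most three (this is the convention on virtual edges from the definition of the component tree). Hence some bag $B_i^{\text{orig}}$ of the current tree decomposition already contains $\tau_i$. I would then attach, as a new leaf adjacent to $B_i^{\text{orig}}$, a fresh bag $B_i^{\text{new}} = \{v,v_i\}\cup(\tau_i\setminus\{v\})$ of size at most four.

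The verifications are routine: the star edge $(v,v_i)$ and every redirected edge from $v_i$ to a vertex of $\tau_i\setminus\{v\}$ have both endpoints inside $B_i^{\text{new}}$; edges at $v$ not touching any separating set remain covered by the original bags of $\tilde{T}$; for the connectivity condition, the new vertex $v_i$ occurs in only one bag, while the bags containing $v$ still form a connected subtree because each $B_i^{\text{new}}$ is attached to a bag of $\tilde{T}$ that already contained $v$. After processing $v$, it lies in no separating set and each $v_i$ lies in exactly one, namely the modified $\tau_i' = (\tau_i\setminus\{v\})\cup\{v_i\}$, which remains a clique of size at most three because the redirected edges preserve the virtual-edge structure. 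Consequently, processing different such vertices in turn is independent, each step only adds bags of size at most four without enlarging any existing bag, and the final width is at most $\max(W,3)$.

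I do not anticipate a substantial obstacle. The one delicate point is sustaining the invariant that, at the moment $v$ is processed, every separating set $\tau_i$ containing $v$ is still a clique of size at most three in the current graph; this holds by induction on the processing order because the star replacement preserves both the size and the clique structure of every separating set it touches. In particular, the fact that $k$ may itself be large is harmless, since each $v_i$ is isolated in its own size-$4$ leaf bag rather than forced into a common bag with $v$.
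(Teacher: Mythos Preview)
Your proposal is correct and follows essentially the same approach as the paper: both start from a bounded-width tree decomposition of $G_D$, use that each separating set $\tau_i$ is a clique and hence lies in some bag, and place each new star leaf $v_i$ at that bag. The only cosmetic difference is that you attach a fresh leaf bag $\{v,v_i\}\cup(\tau_i\setminus\{v\})$ of size at most four, whereas the paper simply inserts $v_i$ into the existing bag $B_i$; either way the width stays constant.
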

\begin{proof}
Let $T_D$ be a tree decomposition of $G_D$ such that each bag of $T_D$ is of constant size, i.e., contains some constant number of vertices. Let $v$ be a vertex shared by $k$ separating sets $\{x_i,y_i,v\}$, for all $i \in [k]$ in $G_D$. Let $B_1,B_2, \ldots B_k$ be the bags in $T_D$ that contain separating sets $\{x_1,y_1,v\}, \{x_2,y_2,v\} , \ldots ,\{x_k,y_k,v\}$ respectively (note that one bag may contain many separating sets). Now we obtain a tree decomposition $T_{D'}$ of the graph $G_{D'}$ using $T_D$ as follows: add the vertices $v_i$ in the bag $B_i$, for all $i \in [k]$. Repeat this for each vertex $v$ in $G_D$, which is shared by more than one separating set to obtain $T_{D'}$. Note that in each bag of $T_D$ we add at most one new vertex with respect to each separating set contained in the bag in order to obtain $T_{D'}$. Since each bag in $T_D$ can contain vertices of only constant many separating sets, size of each bag remain constant in $T_{D'}$. Also, $T_{D'}$ is a tree decomposition of $G_{D'}$.
\end{proof}

\subparagraph{(ii) A separating set is shared by at most two components:} Assume that a separating set of size $t$, $\tau =\{x_i\}_{i \leq t}$ is shared by $k$ components $D_1,D_2, \ldots D_k$, for $k>2$, in $T_0$. Let $\beta$ be a gadget defined as follows: the gadget consists of $t$ star graphs $\{\gamma_i\}_{i \leq t}$ such that $x_i$ is the center node of $\gamma_i$ and each $\gamma_i$ has $k$ leaf nodes $\{x_i^1,x_i^2, \ldots x_i^k\}$. There are virtual cliques present among the vertices $\{x_i^j\}_{i \leq t}$ for all $j \in [k]$ and among $\{x_i\}_{i \leq t}$ (see Figure \ref{fig:sepset}). If there is an edge present between any pair of vertices in the set $\{x_i\}_{i \leq t}$ in the original graph, then we add a real edge between respective vertices in $\beta$. $\beta$ shares the separating set $\{x_i^j\}_{i \leq t}$ with the component $D_j$ for all $j \in [k]$.

Note that in this construction, we create new components ($\beta$) while all the other components in the component tree remain unchanged. Notice that the tree-width of $\beta$ is constant (at most $5$ to be precise). We can define a tree decomposition of $\beta$ of tree-width $5$ as follows: $B_0,B_1',B_2', \ldots ,B_k'$ be the bags in the tree decomposition such that $B_0 = \{x_1,x_2,x_3\}$, $B_i'= \{x_1,x_2,x_3,x^i_1,x^i_2,x^i_3\}$ and there is an edge from $B_0$ to $B_i'$ for all $i \in [k]$.

\subparagraph{(iii) Any virtual triangle, i.e., the triangle consists of virtual edges, in a planar component is always a face:} $3$-cliques in a $3$-clique sum of a planar and a bounded tree-width component is always a face in the planar component. This is because suppose there is a planar component $G_i$ in which the $3$-clique on $u,v,w$ occurs but does not form a face. Then the triangle $u,v,w$ is a separating set in $G_i$, which separates the vertices in its interior $V_1$ from the vertices in its exterior $V_2$. Notice that neither of $V_1,V_2$ is empty by assumption since $u,v,w$ is not a face. However, then we can decompose $G_i$ further.

\subsection{Preserving nonzero circulation:}
\label{subsec:presnzc}
We can show that if we replace a vertex with the gadget $\gamma$, then the nonzero-circulation in the graph remains preserved: let $G_1(V_1,E_1)$ be a graph such that a vertex $v$ in $G_1$ is replaced with the gadget $\gamma$ (star graph). Let this new graph be $G_2(V_2,E_2)$. We show that if we have a skew-symmetric weight function $w_2$ that gives nonzero circulation to every cycle in $G_2(V_2,\dvec{E}_2)$, then we can obtain a skew-symmetric weight function $w_1$ that gives nonzero circulation to every cycle in $G_1(V_1,\dvec{E}_2)$ as follow. Let $u_1,u_2, \ldots, u_k$ be the neighbors of $v$ in $G_1$. For the sake of simplicity, assume that $v$ is replaced with $\gamma$ such that $\gamma$ has only two leaves $v_1$ and $v_2$ and $v$ is the center of $\gamma$. Now assume that $u_1,u_2, \ldots ,u_j$ become neighbors of $v_1$ and, $u_{j+1},u_{j+2}, \ldots , u_k$ become neighbors of $v_2$ in $G_2$, for some $j<k$. We define a function $P$ that maps each edge of $G_1$ to at most two edges of $G_2$ as follows. For edge $(u_i,v)$ in $G_1$,
\[
P(u_i,v)=
\begin{cases}
\{(u_i,v_1),(v_1,v) \},& \text{ if $u_i$ is a neighbor of $v_1$ in $G_2$, }\\
\{ (u_i,v_2),(v_2,v) \},& \text{ if $u_i$ is a neighbor of $v_2$ in $G_2$. }
\end{cases}
\]
For all the other edges $e$ of $G_1$, $P(e) = \{e \in G_2\}$. Now given weight function $w_2$ for $G_2$, we define weight function $w_1$ for $G_1$ as follows:
\begin{align*}
w_1(e) = \sum_{e' \in P(e)} w_2(e')
\end{align*}
For any ${F} \subseteq \dvec{E_1}$, we define $P({F}) = ( P(e) \mid e \in {F})$. Let $C$ be a simple cycle in $G_1$. Notice that the set of edges in $P(C)$ form a walk in $G_2$. Also, note that for some edges $e$ of $G_2$ both $e$ and $e^r$ may appear in $P(C)$. Let $\hat{E}_2(C)$ be the set of edges in $P(C)$ such that for each $e \in \hat{E}_2(C)$, both $e$ and $e^r$ appear in $P(C)$. Since our weight function is skew-symmetric, we know that $\sum_{e \in \hat{E}_2(C)}w_2(e) = 0$. Also, notice that set of edges in the set $P(C) - \hat{E}_2(C)$ form a simple cycle in $G_2$ (proof of this is same as the proof of Claim \ref{clm:simp}, that we prove later in this paper). Let $C'$ be the simple cycle in $G_2$ formed by the edges in the set $P(C) - \hat{E}_2(C)$. We know that,  $\sum_{e \in C}w_1(e) =  \sum_{e \in P(C)}w_2(e) = \sum_{e \in C'}w_2(e) + \sum_{e \in \hat{E}_2(C)}w_2(e) $ and since $\sum_{e \in \hat{E}_2(C)}w_2(e) = 0$ we have,  $\sum_{e \in C}w_1(e) = \sum_{e \in C'}w_2(e).$
%
%\begin{eqnarray*}
%\sum_{e \in C}w_1(e) & = & \sum_{e \in P(C)}w_2(e)\\
%& = & \sum_{e \in C'}w_2(e) + \sum_{e \in \hat{E}_2(C)}w_2(e) \hspace{3cm},
%\end{eqnarray*}
%since $\sum_{e \in \hat{E}_2(C)}w_2(e) = 0$ we have,
%\begin{eqnarray*}
%\sum_{e \in C}w_1(e) & = & \sum_{e \in C'}w_2(e).
%\end{eqnarray*}

Therefore, we can say that if $w_2$ gives nonzero circulation to $C'$, then $w_1$ gives nonzero circulation to $C$. To satisfy property $1$ on the component tree, we replace vertices of the graph with $\gamma$. Furthermore, to satisfy property $2$, we replace vertices with the gadget $\beta$, which contains nothing but multiple copies of $\gamma$. Thus from above, we can conclude that these constructions preserve the nonzero circulation. Now we will work with the graph $G$ and the component tree $T$.
\subsection{Tree decomposition}
Note that the component tree $T$ of $G$ is also a tree decomposition of $G$ in the sense that we can consider the nodes of the component tree as bags of vertices. We know that $T$ contains two types of nodes: (i) nodes that contain planar graphs, (ii) nodes that contain constant tree-width graphs. We call them \pty and \cty nodes, respectively.

Now we will construct another tree decomposition $T'$ of $G$ using the component tree $T$. $T'$ have two types of bags: (i) A bag with respect to each \pty node of $T$, which contains the same set of vertices as the \pty node, and (ii) bags obtained from tree decomposition of the component inside each \cty node. For a node $N$ of $T$, let $G_N$ denote the graph inside node $N$. Let $V(G_N)$ denote the vertices in the graph $G_N$ and $T_N$ be a tree decomposition of $G_N$ obtained using Lemma \ref{lem:boundtw}.

\begin{itemize}

\item Bags of $T'$ are defined as follows: $\{V({G_N}) \mid N \in T, \text{ where $N$ is a \pty node in $T$} \}\; \bigcup \; \{B \mid B \in T_{N}, \text{ where $N$ is a \cty node in $T$}\}$.
We know that in a tree decomposition of a graph $H$, for each clique in $H$, there exists a bag in the tree decomposition of $H$ that contains all the vertices of the clique. Let $\tau$ be a separating set present in the graph contained in a \cty node $N$ of $T$. While constructing a tree decomposition $T_N$, we consider the virtual clique present among the vertices of $\tau$ as a part of $G_{N}$. This ensures that there exists a bag in $T_N$ that contains all the vertices of $\tau$.

\item Edges in $T'$ are defined as follows: (i) for a $\cty$ node $N$, let $B$ and $B'$ be two bags in $T_N$. If an edge in $T_N$ connects $B$ and $B'$, then add an edge between them in $T'$ as well,(ii) let $N'$ and $N''$ be two adjacent nodes in $T$ such that they share a separating set $\tau$. As mentioned above, we know that either both $N'$ and $N''$ are \pty or one of them is \pty and the other one is a \cty node. If both of them are \pty: we add an edge between the bags in $T'$ that contain the vertices $V(G_{N'})$ and $V(G_{N''})$. If $N'$ is a \pty and $N''$ is a \cty node: remember that we replaced node $N''$ by its tree decomposition. let $B$ be any bag in $T_{N''}$ which contains all the vertices of $\tau$. We add an edge between the bag containing vertices $V(G_N)$ and $B$.

\end{itemize}

It is easy to see that $T'$ is also a tree decomposition of $G$. From Lemma \ref{lem:boundtw}, we can say that the overall construction of $T'$ remains in logspace. For simplicity, we rename the bags of $T'$ to $B_1,B_2, \ldots ,B_k$. Let $B_i$ be a bag in $T'$ corresponding to a node $N$ in $T$. If $N$ contains a separating set $\tau$, then the set of vertices of $\tau$ is also called a separating set in $B_i$. We will need this notion later on while constructing the weight function.

Note that Arora et al. \cite{AGGT16} obtained a component tree $\hat{T}$ of an input graph $\hat{G}$ such that each node of $\hat{T}$ contains either a planar graph or a constant size graph. Then they show that for a cycle $C$ in $\hat{G}$, the nodes which contain edges of $C$ form a connected component of $\hat{T}$. They use this property to construct the desired weight function. Here we can say that $T'$ is also a component tree of $G$ such that each node of $T'$ contains a planar graph or a constant size graph, in a sense that we assign each edge $e$ of $G$ to one of the bags of $T'$, which contains both the endpoints of $e$. However, we cannot claim that for a cycle $C$, the bags containing edges of $C$ form a connected component in $T'$ (for example, see Figure \ref{fig:btw}). Thus in this paper, we use the tree decomposition $T'$ to construct another graph $G'$ and associate edges of $G'$ to the bags to $T'$ such that for each cycle $C'$ in $G'$, the bags which have edges of $C'$ associated with them, form a connected component in $T'$. We show that if we can construct a skew-symmetric weight function for $G'$ such that it gives nonzero circulation to every cycle in $G'$, then we can obtain a skew-symmetric weight function for $G$, which gives nonzero circulations to all the cycles in $G$.

\begin{figure}
\begin{center}
\includegraphics[scale=2]{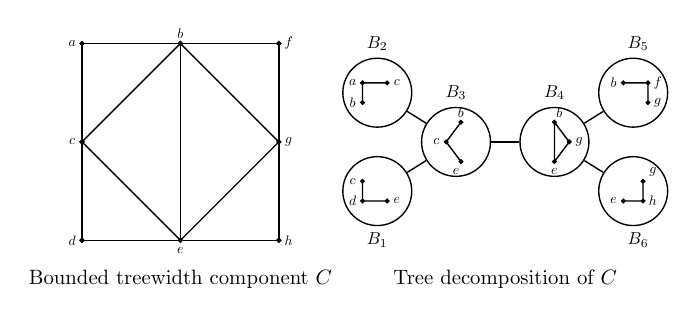}
%\caption{Surface separating cycle $C$}
\caption{Assume $C$ (left) is a constant treewidth component of $G$. We replace $C$ with its tree decomposition (right) in the component tree $T'$ and associated each edge of $C$ with one of the bags in the tree decomposition, as shown in the figure. Notice that for cycle \textit{abfghedca} the bags ($B_1,B_2,B_5$ and $B_6$), which have edges of the cycle associated with them, do not form a connected component.}
\label{fig:btw}
\end{center}
\end{figure}

\subsection{Construction of G'}
We construct $G'(V',E')$ from the tree decomposition $T'$ of $G(V,E)$ as follows. We borrow notation from Datta et al. \cite{DKMTVZ20}. Without loss of generality, assume that $T'$ is a rooted tree and the parent-child relationship is well defined.
\begin{itemize}
\item Vertex set $V' = \{v_{B_i} \mid B_i \in T' , v \in B_i\}$, i.e., for each vertex $v$ of the $G$, we have copies of $v$ in $G'$ for each bag $B_i$ of $T'$ in which $v$ appears. Copies of vertices of a separating set are also called a separating set in $G'$.
\item Edge set $E' = \{(u_{B_i},v_{B_i}) \mid (u,v) \in E, \; u \notin parent(B_i) \; or \; v \notin parent(B_i)\} \; \bigcup \; \{(v_{B_i},v_{B_j}) \mid \text{$B_i$ and $B_j$ are adjacent in $T'$}\}$. In other words, we add an edge between the two vertices $u$ and $v$ of same bag $B_i$ if there is an edge between those vertices in $G$ and no ancestor of $B_i$ in $T'$ contains both the vertices $u$ and $v$. We add an edge between two copies of a vertex if the bags they belong to, are adjacent in $T'$.
\end{itemize}

\begin{lemma}
\label{lem:wtforg}
Given a polynomially bounded, skew-symmetric weight function $w'$ that gives a nonzero circulation to every cycle in $G'$, we can find a polynomially bounded, skew-symmetric weight function $w$ for $G$ that gives a nonzero circulation to every cycle in $G$.
\end{lemma}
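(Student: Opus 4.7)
The plan is to pull back $w'$ along an explicit map from edges of $G$ to edges of $G'$ and to show that, for every simple cycle $C$ in $G$, the circulation $\sum_{e\in C}w(e)$ equals the circulation of a single simple cycle in $G'$. For each edge $(u,v)\in E$ let $B(u,v)$ be the unique topmost bag of $T'$ containing both $u$ and $v$; by the defining rule of $E'$ the pair $(u_{B(u,v)},v_{B(u,v)})$ is the only edge of $G'$ that has $u$ and $v$ as its underlying vertices, so $\phi(u,v)\coloneqq (u_{B(u,v)},v_{B(u,v)})$ is a well-defined injection on $E$. Set $w(u,v)\coloneqq w'(\phi(u,v))$ and extend $w$ to $\dvec{E}$ by $w(e^r)=-w(e)$; skew-symmetry and the polynomial bound are inherited directly from $w'$, and the map $\phi$ is clearly logspace-computable from $T'$.

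For a simple cycle $C=v_1v_2\cdots v_kv_1$ in $G$, I build a closed walk $W$ in $G'$ by alternating bag-edges and tree-edges. Starting at $v_{1,B(v_1,v_2)}$, traverse $\phi(v_j,v_{j+1})$ to reach $v_{j+1,B(v_j,v_{j+1})}$, then walk (using the tree-edges of $G'$) through the subtree of copies of $v_{j+1}$, which is connected in $T'$ because $T'$ is a tree decomposition of $G$, from $B(v_j,v_{j+1})$ to $B(v_{j+1},v_{j+2})$, and iterate. Because the net sequence of bag-to-bag hops in $T'$ returns to its origin, every tree edge $(v_{j+1,B},v_{j+1,B'})$ used along $W$ is used as often as its reverse; skew-symmetry of $w'$ kills those contributions in pairs, giving
\[
\sum_{e\in C}w(e)=\sum_{e\in W}w'(e)-\underbrace{\sum_{\text{tree edges in }W}w'(\cdot)}_{=\,0}=\sum_{e\in W}w'(e).
\]

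The crux of the argument is to show that after erasing all pairs of oppositely-directed edges from $W$ what remains is a single simple cycle $C'$ in $G'$, in direct analogy to Claim~\ref{clm:simp} which the paper promises later. The bag-edges $\phi(v_j,v_{j+1})$ are all distinct (because $\phi$ is injective and the underlying edges of $C$ are distinct), so no bag-edge is cancelled; what remains to analyse are the tree-edge segments. Using the tree-decomposition property that for each vertex $v$ the bags of $T'$ containing $v$ form a subtree $T'_v$, each segment of $W$ that sits inside $T'_{v_{j+1}}$ can be routed along the unique path of $T'_{v_{j+1}}$ between its endpoints; any further back-and-forth collapses, and the properties imposed on $T$ in Section~\ref{sec:modt} (no vertex lies in two separating sets, each separating set is shared by at most two components, virtual triangles are faces) guarantee that distinct segments belonging to different vertices $v_j$ use disjoint sets of tree-edges. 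Hence the residual walk visits each vertex of $G'$ at most once and is a bona fide simple cycle $C'$.

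Combining these, $\sum_{e\in C}w(e)=\sum_{e'\in C'}w'(e')\neq 0$ by the nonzero-circulation property of $w'$, proving the lemma. The main obstacle is the simple-cycle claim in the previous paragraph: one must make sure the cancellation in $W$ does not yield a disjoint union of several cycles in $G'$ whose $w'$-circulations might sum to zero. I expect this to follow cleanly from the subtree property of tree decompositions together with the structural invariants forced on $T'$ by the gadget replacements of Section~\ref{sec:modt}, which were specifically designed so that the pull-back preserves nonzero circulations.
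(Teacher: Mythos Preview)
There is a genuine gap in your definition of $w$. You set $w(u,v)=w'(\phi(u,v))$ using only the single bag-edge at $B(u,v)$; the paper instead sets $w(u,v)=\sum_{e\in P(u,v)}w'(e)$ where $P(u,v)$ is a \emph{path} in $G'$ running from the copy of $u$ in its highest bag, through tree-edges, to the bag-edge, and on to the copy of $v$ in its highest bag. The tree-edge weights are thus baked into $w$, not added afterwards. This matters because your central claim---that in the walk $W$ ``every tree edge is used as often as its reverse''---is false. The tree-edge segment attached to vertex $v_{j+1}$ is the unique path in $T'_{v_{j+1}}$ from $B(v_j,v_{j+1})$ to $B(v_{j+1},v_{j+2})$; since $C$ is simple each vertex contributes exactly one such segment, and (as you yourself observe two paragraphs later) segments for distinct vertices are edge-disjoint. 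Hence essentially none of the tree edges cancel, and $\sum_{\text{tree edges in }W}w'(\cdot)\neq 0$ in general. With your $w$, the quantity $\sum_{e\in C}w(e)$ equals only the bag-edge part of $\sum_{e\in W}w'(e)$; the bag edges alone do not even form a closed walk in $G'$, so the nonzero-circulation hypothesis on $w'$ tells you nothing about their sum. Your argument is in fact internally inconsistent: if all tree edges cancelled, the residual of $W$ would be that disconnected collection of bag edges, not the simple cycle you claim next.

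The fix is precisely the paper's: define $P(u,v)$ as the path in $G'$ from $u$'s highest copy to $v$'s highest copy (passing through the bag-edge at $B(u,v)$) and put $w(u,v)=\sum_{e\in P(u,v)}w'(e)$. Then consecutive paths $P(v_{j-1},v_j)$ and $P(v_j,v_{j+1})$ share the endpoint $v_{j,B_j}$ (with $B_j$ the highest bag for $v_j$), so $P(C)$ is a genuine closed walk whose total $w'$-weight equals $\sum_{e\in C}w(e)$ by construction; removing oppositely-directed pairs leaves a single simple cycle $C'$ (Claim~\ref{clm:simp}), and skew-symmetry gives $w(C)=w'(C')\neq 0$. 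Incidentally, the structural properties from Section~\ref{sec:modt} that you invoke are not used in this step; they are needed earlier, to transfer nonzero circulations from $G$ back to the original input $G_0$.
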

Proof of Lemma \ref{lem:wtforg} is given in the Appendix. Now the only thing remaining is the logspace construction of the polynomially bounded skew-symmetric weight function $w'$.

\subsection*{Constructing Weight Function for G'}\label{subsec:wtandaux} To construct the weight function $w'$ for $G'$, we associate each edge of $G'$ with some bag of $T'$. Let $(u_{B_i},v_{B_j})$ be an edge in $G'$: (i) if $i=j$, i.e., if $B_i$ and $B_j$ are the same bags. In this case associated $(u_{B_i},v_{B_j})$ with that bag, (ii) if $i \neq j$: by our construction of $G'$ we know that either $B_i$ is the parent of $B_j$ or $B_j$ is the parent of $B_i$ (i.e. $u_{B_i}$ and $v_{B_j}$ are the copies of a same vertex of $G$). In both the cases, associate $(u_{B_i},v_{B_j})$ with the parent bag. We will use the following claim later in the paper.

\begin{claim}
\label{clm:connect}
For any cycle, $C$ in $G'$, the bags of $T'$ which have some edge of $C$ associated with them, form a connected component in $T'$.
\end{claim}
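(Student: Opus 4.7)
The plan is to analyze the image of $C$ under the natural projection $\pi:V'\to V(T')$ given by $\pi(v_{B_i})=B_i$. Each intra-bag edge of $C$ stays at a single node of $T'$, and each inter-bag edge traverses a tree-edge of $T'$, so $C$ projects to a closed walk $W$ in the tree $T'$. Using the standard fact that a closed walk in a tree visits a connected subtree, the node set $V(W)$ together with the tree-edges traversed by $W$ forms a connected subtree $T_W\subseteq T'$. Let $S$ denote the set of bags to which the edges of $C$ are associated; I will show $S=V(T_W)$, which, being a subtree, is connected in $T'$, proving the claim.

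One direction, $S\subseteq V(T_W)$, is immediate: an intra-bag edge at $B$ forces $B$ to be visited by $W$, and an inter-bag edge between adjacent bags is associated with the parent of the two, both of which sit on a traversed tree-edge. For the converse I root $T_W$ at the node $B_{\textrm{root}}$ closest to the root of $T'$; because $T_W$ is an induced subtree of $T'$, for every $B\ne B_{\textrm{root}}$ in $T_W$ the $T'$-parent of $B$ coincides with its $T_W$-parent. Hence whenever $B$ has a child in $T_W$, the tree-edge to that child is associated with $B$ itself by the ``credit the parent'' rule, placing $B\in S$. This handles all internal nodes of $T_W$, and also handles $B_{\textrm{root}}$ except in the degenerate case where $T_W$ is a single node, which corresponds to $C$ lying entirely among intra-bag edges of one bag and is immediate.

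The remaining case, which is the crux, is a leaf $B\ne B_{\textrm{root}}$ of $T_W$: its only $T_W$-edge goes upward, so no tree-edge of $W$ is credited to $B$, and one must invoke simplicity of $C$. Each visit of $W$ to $B$ enters via an inter-bag edge $(v_{B_{\textrm{parent}}},v_B)$ and leaves via another inter-bag edge $(v'_B,v'_{B_{\textrm{parent}}})$; because $C$ is a simple cycle, no edge of $G'$ is reused, forcing $v\ne v'$, and because $B$ is a leaf of $T_W$ the segment of $C$ between $v_B$ and $v'_B$ cannot leave $B$. This segment must therefore contain at least one intra-bag edge at $B$, so $B\in S$. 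Combining the cases yields $S=V(T_W)$.

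The only genuine obstacle is this leaf step, which is precisely why the hypothesis that $C$ is a \emph{simple} cycle (not merely a closed walk) is needed: without simplicity one could enter and exit a leaf of $T_W$ along the same copy of a vertex, leaving the leaf without any intra-bag edge to certify it. Everything else is routine bookkeeping on the projection $\pi$ and on the ``induced subtree'' structure of $T_W$ inside $T'$.
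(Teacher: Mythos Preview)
Your proof is correct. Both your argument and the paper's begin from the same projection $\pi:V'\to V(T')$ and the connected subtree $T_W$ of bags visited by $C$; they diverge only in how connectedness of $S$ is extracted from this. The paper proves merely the implication ``if $B\in V(T_W)$ has no edge of $C$ associated with it, then no bag in the subtree rooted at $B$ does either'' and infers connectedness of $S$ from that; this step never appeals to simplicity of $C$ and would go through for an arbitrary closed walk in $G'$. You instead establish the stronger equality $S=V(T_W)$ via a root/internal/leaf case split, and simplicity is genuinely needed in your leaf case. Note that your internal-node case (``a $T_W$-child forces an inter-bag edge credited to $B$'') is exactly the contrapositive of the paper's implication, so the two proofs share their core observation. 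What you add is the leaf analysis, which buys a cleaner characterization $S=V(T_W)$ at the price of invoking the simple-cycle hypothesis; what the paper's route buys is a proof that works verbatim for closed walks.
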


The weight function $w'$ is similar to the one constructed for $K_{3,3}$-free and $K_5$ free graphs \cite{AGGT16}. We assign weights to the edges of the graph $G'$ depending upon the height of the bag they are associated with. The weights assigned to them are exponential in the height of the bags. Therefore, we need the height of a bag to be $O(\log n)$ to obtain a polynomially bounded weight function. Hence similar to \cite{AGGT16}, we define an auxiliary tree $A(T')$ of the tree $T'$. In some sense, $A(T')$ is a balanced representation of $T'$, therefore the height of $A(T')$ is $O(\log n)$. Nodes in $A(T')$ are the same as $T'$, i.e. the bags of $T'$, but the edges between the bags are inserted differently. The weight of an edge of $G'$ associated with a bag $B$ of $T'$ depends upon the height of the bag $B$ in $A(T')$.

\subparagraph*{Auxiliary Tree:} In order to construct the auxiliary tree from $T'$, first, we find a node called \textit{center} node $c(T')$. Make this node the root of the $A(T')$. Let $T_1,T_2, \ldots ,T_l$ be the subtrees obtained by deleting $c(T')$ from $T'$. Recursively apply the same procedure on these subtrees and make $c(T_1),c(T_2), \ldots, c(T_l)$ children of $c(T)$. If $c(T)$ shares a separating set $\tau$ with $T_i$ then $c(T_i)$ is said to be attached at $\tau$ with $c(T)$. Center nodes are chosen in such a way that the resultant tree $A(T')$ has height $O(\log n)$. Readers are referred to Section 3.3 of Arora et al. \cite{AGGT16} to see the logspace construction of $A(T')$. We use the same construction here. The height of the root of $A(T')$ is defined as the number of nodes in the longest path from the root $c(T)$ to a leaf node. The heights of other nodes are one less than that of their parent. From now on, we work with $A(T')$. We use the following two properties of the auxiliary tree.
\begin{enumerate}[(i)]
\item Height of a node in $A(T')$ is $O(\log n)$.
\item If ${\tilde{T}}$ is a subtree of $T'$, then there exists a bag $B$ in $\tilde{T}$ such that all the other bags of $\tilde{T}$ are descendants of $B$ in $A(T')$.
\end{enumerate}
Now we define the weight function $w'$ for the graph $G'$. Note that the graph induced by the set of edges associated with a bag $B_i$ is either planar or constant size; we call these the components of $G'$. $w'$ is a linear combination of two weight functions $w_1$ and $w_2$. $w_1$ gives nonzero circulation to those cycles which are completely contained within a component, and $w_2$ gives nonzero circulation to those cycles which span over at least two components. We define $w_1$ and $w_2$ separately for planar and constant size components. Let $G'_{B_i}$ be the graph induced by the set of edges associated with the bag $B_i$. Let $K$ be a constant such that $K > max(2^{m+2},7)$, where $m$ is the maximum number of edges associated with any constant size component.

\subparagraph{ If $G'_{B_i}$ is a planar component:}
$w_1$ for such components is same as the weight function defined in \cite{BTV09} for planar graphs. We know that given a planar graph $G$, its planar embedding can be computed in logspace \cite{AllenderMahajan04}.

\begin{theorem}[\cite{BTV09}]
Given a planar embedding of a graph $H$, there exists a logspace computable function $w$ such that for every cycle $C$ of $H$, circulation of the cycle $w(C) \neq 0$.
\end{theorem}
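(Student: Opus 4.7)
The plan is to construct the weight function from the face structure of the planar embedding, exploiting the fact that every simple cycle in a plane graph is the homological sum of the bounded faces it encloses. First I would compute a planar embedding of $H$ in logspace using the Allender--Mahajan result cited in the paragraph preceding the theorem, and enumerate its faces $f_1,\ldots,f_r$, designating $f_r$ as the outer face. Assign ``divergences'' $W(f_i)=1$ for each bounded face $f_i$ ($i<r$) and $W(f_r)=-(r-1)$, so that $\sum_i W(f_i)=0$.

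The goal is now to realize these prescribed face-circulations by actual edge weights $w$, defined on one chosen orientation of each edge and extended skew-symmetrically. I would work in the dual graph $G^*$: each face of $H$ is a dual vertex, each edge of $H$ yields a dual edge joining its two incident faces. Compute a spanning tree $T^*$ of $G^*$ rooted at $f_r$ (possible in logspace by Reingold's theorem together with logspace face enumeration from the combinatorial embedding). For every dual edge corresponding to a primal edge $e$, if that dual edge belongs to $T^*$ set $w(e)$ equal to the sum of $W$-values over the subtree of $T^*$ rooted at the lower endpoint, with sign dictated by the embedding's rotation; if it is a non-tree dual edge set $w(e)=0$. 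By the standard flow-conservation argument on $G^*$, the net flow into each dual vertex equals its assigned $W$-value, which means precisely that the circulation of $w$ around each face $f_i$ of $H$ equals $W(f_i)$.

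Now take any simple cycle $C$ of $H$. In the planar embedding $C$ bounds a nonempty set $F_C$ of bounded faces, and in the cycle space $C$ equals the (consistently oriented) sum of the boundaries of the faces in $F_C$. Therefore $w(C)=\sum_{f\in F_C}W(f)=|F_C|\ge 1$, and in particular $w(C)\neq 0$. The weights are integers bounded in absolute value by $r-1=O(n)$, so $w$ is polynomially bounded. The main obstacle is composing the logspace subroutines---planar embedding, face enumeration via rotation traversal, dual spanning tree via Reingold, and subtree-sum evaluation---into a single logspace procedure; each of these is known to be logspace-computable individually, and their composition is straightforward so long as face labels and the dual tree are stored implicitly and recomputed on demand rather than kept on the worktape.
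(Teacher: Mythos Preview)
The paper does not prove this theorem; it is quoted from \cite{BTV09} and used as a black box for defining $w_1$ on planar components. Your reconstruction is correct and is exactly the standard argument behind that citation: assign unit weight to every bounded face, realise these face values as skew-symmetric edge weights (via a dual spanning tree, or equivalently the transfer cited as~\cite{Kor09}), and then the circulation of any simple cycle equals the number of bounded faces it encloses and is therefore positive. The paper itself sketches this same mechanism two paragraphs after the theorem, writing that ``the weight of the clockwise cycle will be the sum of the weights of the faces inside the cycle~\cite{BTV09}'' and citing~\cite{Kor09} for the face-to-edge step, so your proposal aligns with both the cited source and the surrounding exposition.
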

The above weight function gives nonzero circulation to every cycle that is completely contained in a planar component.

The weight function $w_2$ for planar components is defined as follows. $w_2$ assigns weights to only those faces of the component, which are adjacent to some separating set. For a subtree of $T_s$ of $A(T')$, let $l(T_s)$ and $r(T_s)$ denote the number of leaf nodes in $T_s$ and root node of $T_s$, respectively. For a bag $B_i$, $h(B_i)$ denotes the height of the bag in $A(T')$. If $B_i$ is the only bag in the subtree rooted at $B_i$, then each face in $G'_{B_i}$ is assigned weight zero. Otherwise, let $\tau$ be a separating set where some subtree $T_i$ is attached to $B_i$. The faces adjacent to $\tau$ in $G'_{B_i}$ are assigned weight $2\times K^{h(r(T_i))} \times l(T_i)$. If a face is adjacent to more than one separating set, then the weight assigned to the face is the sum of the weights due to each separating set. The weight of a face is defined as the sum of the weights of the edges of the face in clockwise order. If we have a skew-symmetric weight function, then the weight of the clockwise cycle will be the sum of the weights of the faces inside the cycle \cite{BTV09}. Therefore assigning positive weights to every face inside a cycle will ensure that the circulation of the cycle is nonzero. Given weights on the faces of a graph, we can obtain weights for the edges so that the sum of the weights of the edges of a face remains the same as the weight of the face assigned earlier \cite{Kor09}.

\subparagraph{If $G'_{B_i}$ is a constant size component:} For this type of component, we need only one weight function. Thus we set $w_2$ to be zero for all the edges in $G'_{B_i}$ and $w_1$ is defined as follows. Let $e_1,e_2, \ldots ,e_k$ be the edges in the component $Q_i$, for some $k \leq m$. Edge $e_j$ is assigned weight $2^i \times K^{h(r(T_i))-1} \times l(T_{i})$ (for some arbitrarily fixed orientation), Where $T_{i}$ is the subtree of $A(T')$ rooted at $B_i$. Note that for any subset of edges of $G'_{B_i}$, the sum of the weight of the edges in that subset is nonzero with respect to $w_1$.

The final weight function is $w' = \langle w_1 + w_2 \rangle.$ Since the maximum height of a bag in $A(T')$ is $O(\log n)$, the weight of an edge is at most $O( n^c)$, for some constant $c>0$.

\begin{lemma}
\label{lem:maxval}
For a cycle $C$ in $G'$ sum of the weights of the edges of $C$ associated with the bags in a subtree $T_i$ of $A(T')$ is $< K^{h(r(T_i))} \times l(T_i) $.
\end{lemma}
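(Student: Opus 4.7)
The plan is to prove Lemma~\ref{lem:maxval} by induction on the height $h := h(r(T_i))$ of the root of $T_i$ in $A(T')$. For any subtree $S$ of $A(T')$, write $W(S) := \sum_{e} w'(e)$ for the (signed) sum over those edges $e$ of $C$ that are associated with some bag in $S$; the goal is $|W(T_i)| < K^{h}\, l(T_i)$. For the base case $h = 1$, $T_i$ consists of a single leaf bag $B$ with $l(T_i) = 1$ and no subtree attached below, so $w_2 \equiv 0$ on the edges of $G'_B$. If $B$ is constant-size then $|W(T_i)| \leq \sum_{j=1}^{m} 2^j < 2^{m+2} < K$; if $B$ is planar, $w_2$ vanishes and the $w_1$-contribution inherited from the packed BTV weight fits within the same bound via the chosen normalization.

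For the inductive step, let $B = r(T_i)$ and let $S_1, \ldots, S_k$ be the subtrees of $A(T')$ rooted at the children of $B$. The induction hypothesis gives $|W(S_j)| < K^{h-1} l(S_j)$ for each $j$, and since the leaves of $T_i$ form the disjoint union of the leaves of its child subtrees (as $B$ itself is not a leaf when $h>1$),
\[
  \sum_{j=1}^{k} |W(S_j)| \;<\; K^{h-1} \sum_{j=1}^{k} l(S_j) \;=\; K^{h-1}\, l(T_i).
\]
The remaining contribution $W_B$ comes from edges of $C$ associated with $B$ itself. If $B$ is constant-size, then $w_2 \equiv 0$ on $G'_B$ and the $w_1$-weights sum to at most $\sum_{j=1}^{m} 2^j K^{h-1} l(T_i) < 2^{m+2}\, K^{h-1} l(T_i)$. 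If $B$ is planar, the intersection of $C$ with $G'_B$ decomposes into simple subpaths whose total $w_2$-weight equals the sum of $w_2$-weights of the enclosed faces. By the definition of $w_2$, a face carries mass only if it is adjacent to some separating set $\tau$ where a child subtree $S_\tau$ attaches, with contribution $2 K^{h-1} l(S_\tau)$; by property~(iii) of the modified component tree, each such $\tau$ is itself a face of $G'_B$ and therefore borders only $O(1)$ other faces. Summing over all child-attachments bounds the total face-weight by $O(K^{h-1} \sum_j l(S_j)) = O(K^{h-1} l(T_i))$, with the BTV $w_1$-contribution polynomially subdominant. Combining the two estimates and invoking $K > \max(2^{m+2}, 7)$ yields $|W(T_i)| \leq |W_B| + \sum_j |W(S_j)| < K \cdot K^{h-1} l(T_i) = K^{h}\, l(T_i)$, closing the induction.

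The principal obstacle is the planar-bag step: converting the $w_2$-weight along $C$ inside $G'_B$ into a sum of enclosed-face weights, and then establishing a linear rather than unbounded dependence on $\sum_j l(S_j)$. Property~(iii) is precisely what makes this possible, since without the guarantee that every virtual triangle is a face, a single child subtree attached at $\tau$ could deposit weight into unboundedly many faces of $G'_B$, destroying the constant factor needed for $K > \max(2^{m+2}, 7)$ to close the recursion. A secondary subtlety is checking that the additive $w_1$-term in the planar case (coming from the \cite{BTV09} weights) is really absorbed into the lower-order slot of the packed weight $w'$; this uses the base $B$ of the packing being chosen large enough relative to the polynomial BTV bound, exactly as in the analogous estimates of \cite{AGGT16,DKMTVZ20}.
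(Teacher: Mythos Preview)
Your induction on $h(r(T_i))$ with the constant-size/planar case split and the final arithmetic via $K>\max(2^{m+2},7)$ is exactly the paper's argument. The one substantive gap is in your planar inductive step: you assert that the $w_2$-weight of the subpaths of $C$ inside $G'_B$ ``equals the sum of $w_2$-weights of the enclosed faces,'' but a collection of open paths does not enclose any region, so this equality has no meaning as stated and cannot be used to bound $|W_B|$. The paper sidesteps this entirely by not appealing to enclosed faces; it simply observes that each separating set $\tau_j$ lies in at most $3$ faces of $G'_{r(T_i)}$, each carrying face-weight $2K^{h(r(T_i^j))}l(T_i^j)$, so the total $w_2$-contribution chargeable to $\tau_j$ is at most $6K^{h(r(T_i^j))}l(T_i^j)$, and summing over $j$ gives the explicit factor $6$ (not an unspecified $O(1)$) that is what actually makes $K>7$ close the recursion.

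Your separate concern about the BTV $w_1$-term in planar bags is well-spotted---the paper's proof silently addresses only $w_2$ in the planar case and calls the base case ``trivial'' for the same reason---but your resolution via ``chosen normalization'' and ``polynomially subdominant'' is hand-waving rather than an argument. The intended reading, consistent with \cite{AGGT16}, is that the lemma bounds the $w_2$-part (equivalently the high-order slot of the packed weight), with $w_1$ handled separately in the proof of Lemma~\ref{lem:nonzerowt} for cycles confined to a single component.
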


\begin{proof}
Let $w(C_{T_i})$ denotes the sum of the weight of the edges of a cycle $C$ associated with the bags in $T_i$. We prove the Lemma by induction on the height of the root of the subtrees of $A(T')$. Note that the Lemma holds trivially for the base case when the height of the root of a subtree is $1$.

\textit{Induction hypothesis}: Assume that it holds for all the subtrees such that the height of their root is $<h(r(T_i))$.

Now we will prove it for $T_i$. Let $T_i^1,T_i^2, \ldots , T_i^k$ be the subtrees attached to $r(T_i)$.
\begin{itemize}
\item First, consider the case when $G'_{r(T_i)}$ is a constant size graph: In this case, we know that the sum of the weights of the edges of $C$ associated with $r(T_i)$ is $ \leq \sum_{j=1}^m 2^j \times K^{h(r(T_i))-1} \times l(T_i) $ and by the induction hypothesis, we know that $w(C_{T_i^j}) < K^{h(r(T_i^j))} \times l(T_i^j)$, for all $j\in [k]$. Therefore,
\begin{eqnarray*}
w(C_{T_i}) &\leq& \sum_{j=1}^m 2^j \times K^{h(r(T_i))-1} \times l(T_i) + \sum_{j=1}^k K^{h(r(T_i^j))} \times l(T_i^j) \\
w(C_{T_i}) &\leq& (2^{m+1}-1) \times K^{h(r(T_i))-1} \times l(T_i) + K^{h(r(T_i))-1} \times l(T_i) \\
w(C_{T_i}) &\leq & (2^{m+1}) \times (K^{h(r(T_i))-1} \times l(T_i)) \hspace{4cm} [ K > 2^{m+2}] \\
w(C_{T_i}) &<& K^{h(r(T_i))} \times l(T_i)
\end{eqnarray*}

\item When $G'_{r(T_i)}$ is a planar graph: let $\tau_1,\tau_2, \ldots ,\tau_k$ be the separating sets present in $G'_{r(T_i)}$ such that the subtree $T_i^j$ is attached to $r(T_i)$ at $\tau_j$, for all $j \in [k]$. A separating set can be present in at most 3 faces. Thus it can contribute $2 \times 3 \times K^{h(r(T_i^j))} \times l(T_i^j)$ to the circulation of the cycle $C$. Therefore,
\begin{eqnarray*}
w(C_{T_i}) &\leq & \sum_{j=1}^k 6 \times K^{h(r(T_i^j))} \times l(T_i^j) + \sum_{j=1}^k K^{h(r(T_i^j))} \times l(T_i^j) \\
w(C_{T_i}) &\leq& 7 \times K^{h(r(T_i))-1} \times l(T_i)\hspace{5cm} [K > 7] \\
w(C_{T_i}) &<& K^{h(r(T_i))} \times l(T_i)
\end{eqnarray*}
\end{itemize}
\end{proof}

\begin{lemma}
\label{lem:domwt}
For a cycle, $C$ in $G$ let $B_i$ be the unique highest bag in $A(T')$ that have some edges of $C$ associated with it. Then the sum of the weights of the edges of $C$ associated with $B_i$ will be more than that of the rest of the edges of $C$ associated with the other bags.
\end{lemma}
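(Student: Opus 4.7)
The plan is to identify the contribution from $B_i$ and compare it against the bound on the rest given by Lemma~\ref{lem:maxval}. By Claim~\ref{clm:connect}, the bags of $T'$ holding edges of $C$ form a connected subtree of $T'$. Property~(ii) of the auxiliary tree then guarantees that there is a unique bag $B_i$ of maximum height in $A(T')$ having edges of $C$ associated with it, and every other such bag lies in some subtree $T_i^{1}, \ldots, T_i^{k}$ rooted at a child of $B_i$ in $A(T')$. Let $\tau_{j}$ be the separating set at which $T_i^{j}$ is attached to $B_i$. Applying Lemma~\ref{lem:maxval} to each of these subtrees gives
\[
\sum_{j=1}^{k} |w'(C_{T_i^{j}})| \;<\; \sum_{j=1}^{k} K^{h(r(T_i^{j}))}\, l(T_i^{j}) \;\leq\; K^{h(B_i)-1}\sum_{j=1}^{k} l(T_i^{j}) \;\leq\; K^{h(B_i)-1}\, l(T_i),
\]
so it suffices to show that the weight contribution of the edges of $C$ associated with $B_i$ has absolute value strictly larger than $K^{h(B_i)-1}\, l(T_i)$.

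If $G'_{B_i}$ is a constant-size component, then $w' = w_{1}$ on its edges and the $j$-th edge carries weight $\pm 2^{j} K^{h(B_i)-1} l(T_i)$ according to the orientation induced by $C$. Since $C$ is a simple cycle, each edge appears at most once, so the signed sum is $\bigl(\sum_{j \in S} \varepsilon_j 2^{j}\bigr)\, K^{h(B_i)-1} l(T_i)$ with $\varepsilon_j \in \{-1,+1\}$ and $S \neq \emptyset$. A standard binary-dominance argument (the largest $2^{j_{\max}}$ cannot be cancelled by $\sum_{j<j_{\max}} 2^{j} = 2^{j_{\max}}-2$) shows the absolute value is at least $2 K^{h(B_i)-1} l(T_i)$, which strictly exceeds the bound above.

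If $G'_{B_i}$ is planar, the argument relies on the face-weight function $w_{2}$. The edges of $C$ in $B_i$ form a disjoint union of paths whose endpoints lie in the separating sets $\tau_{j}$; by property~3, each $\tau_{j}$ is itself a triangular face of $G'_{B_i}$, so every such path can be closed using the virtual edges of $\tau_{j}$ to produce a closed curve in the planar embedding. The $w_{2}$-circulation of $C$ restricted to $B_i$ therefore equals, up to signs, the sum of face weights of the faces enclosed by these closures, and for each subtree $T_i^{j}$ that $C$ actually enters, at least one face adjacent to $\tau_{j}$—carrying weight $2 K^{h(r(T_i^{j}))} l(T_i^{j})$—is enclosed. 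Since all face weights assigned by $w_{2}$ are positive and no cancellation between different subtrees can occur (their enclosed faces are disjoint), one obtains
\[
|w_{2}(C_{B_i})| \;\geq\; 2\sum_{j=1}^{k} K^{h(r(T_i^{j}))}\, l(T_i^{j}),
\]
which is more than twice the bound on the subtree contributions; the comparatively tiny $w_{1}$ contribution from the planar nonzero-circulation weights cannot overcome this gap, as the $K^{h(B_i)-1}$ factor already swamps the $O(n^{O(1)})$-sized BTV weights once $K$ is chosen large enough.

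The main obstacle is the planar case, and specifically the geometric step of showing that \emph{some} face adjacent to $\tau_{j}$ is enclosed whenever the cycle enters $T_i^{j}$. One must handle the cycle entering a single subtree through two or more vertices of $\tau_{j}$, possibly multiple times, and verify that the orientations inherited from $C$ do not lead to cancellation between faces adjacent to different separating sets. Property~3 (virtual triangles are faces) is crucial here, since it ensures the closing can be performed inside the planar embedding without distorting which faces lie inside the closed cycle.
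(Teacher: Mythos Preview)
Your overall strategy matches the paper's: bound the contribution from the children subtrees via Lemma~\ref{lem:maxval} and show that the contribution from $B_i$ itself exceeds that bound. The paper's proof is far terser than yours. It does not split into cases at all; it simply asserts that the $B_i$-contribution is at least $2K^{h(r(T_i))-1}l(T_i)$, observes that the subtree contributions sum to strictly less than $\sum_j K^{h(r(T_i^j))}l(T_i^j)=K^{h(r(T_i))-1}l(T_i)$, and concludes. Your binary-dominance argument for the constant-size case is a correct and welcome elaboration of what the paper leaves implicit.

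For the planar case, however, you go well beyond what the paper does, and you do not actually finish: your final paragraph flags the face-enclosure step and the non-cancellation between separating sets as outstanding obstacles rather than proving them. The paper sidesteps all of this geometry by simply stating the lower bound, so if your goal is to reproduce the paper's argument you are over-engineering this case. Your worry about the BTV weight $w_1$ interfering is likewise absent from the paper: in both Lemma~\ref{lem:maxval} and Lemma~\ref{lem:domwt} the planar bookkeeping silently tracks only the $w_2$ face weights, so to match the paper you should drop $w_1$ from the planar lower-bound computation rather than try to argue it is absorbed by choosing $K$ large (which cannot work anyway, since $K$ is a fixed constant independent of $n$).
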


\begin{proof}
Let $T_i$ be the subtree of $A(T')$ rooted at $B_i$. We know that sum of the weights of the edges of $C$ associated $B_i$ is $\geq 2 \times K^{h(r(T_i))-1} \times l(T_i)$. Let $T_i^1, T_i^2 , \ldots ,T_i^k$ be the subtree of $T_i$ rooted at children of $B_i$. By Lemma \ref{lem:maxval}, we know that the sum of the weight of the edges of $C$ associated with the bags in these subtrees is $< \sum_{j=1}^k K^{h(r(T_i^j))} \times l(T_i^j) = K^{h(r(T_i))-1} \times l(T_i)$. Therefore, the lemma follows.
\end{proof}

\begin{lemma}
\label{lem:nonzerowt}
Circulation of a simple cycle $C$ in the graph $G'$ is nonzero with respect to $w'$.
\end{lemma}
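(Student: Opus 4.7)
The plan is to apply Claim~\ref{clm:connect}, property~(ii) of the auxiliary tree $A(T')$, and Lemma~\ref{lem:domwt} in sequence. By Claim~\ref{clm:connect}, the set of bags $\mathcal{B}$ of $T'$ that carry at least one edge of $C$ induces a connected subtree of $T'$. Property~(ii) of $A(T')$ then singles out a unique highest bag $B_i \in \mathcal{B}$ in $A(T')$, in the sense that every bag of $\mathcal{B}$ is a descendant of $B_i$ in $A(T')$; let $T_i$ denote the subtree of $A(T')$ rooted at $B_i$.

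Decomposing the circulation as $w'(C) = S_{B_i} + S_{\mathrm{rest}}$, where $S_{B_i}$ is the signed sum of $w'$-weights of edges of $C$ associated with $B_i$ and $S_{\mathrm{rest}}$ is the corresponding sum over edges associated with strict descendants of $B_i$, Lemma~\ref{lem:domwt} gives $|S_{B_i}| > |S_{\mathrm{rest}}|$, and therefore $w'(C) \neq 0$.

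The substantive step is the lower bound $|S_{B_i}| \geq 2 K^{h(r(T_i))-1} \cdot l(T_i)$ invoked in the proof of Lemma~\ref{lem:domwt}. I would verify it by cases on the type of $G'_{B_i}$. In the constant-size case, the edges carry distinct weights of the form $2^j \cdot K^{h(r(T_i))-1} \cdot l(T_i)$, so any non-empty signed combination has magnitude at least $2 K^{h(r(T_i))-1} \cdot l(T_i)$ by uniqueness of signed binary representations. In the planar case, the edges of $C$ lying inside $G'_{B_i}$ form vertex-disjoint paths whose endpoints lie on separating sets at which $C$ descends into attached subtrees; closing these paths through the virtual clique edges at their endpoints, which by property~(iii) are triangular faces of $G'_{B_i}$, produces a simple cycle $C^*$ in $G'_{B_i}$. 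Expressing the $w_2$-circulation of $C^*$ as a signed sum of weights of enclosed faces and isolating the face adjacent to the separating set of an attached subtree $T_i^{j^*}$ maximizing $K^{h(r(T_i^{j^*}))} \cdot l(T_i^{j^*})$ then yields a dominant positive term, supplying the desired lower bound.

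The main obstacle will be the planar sub-case: one must ensure that face-weight contributions from different attached subtrees cannot cancel the dominant contribution. This is controlled by the exponential scaling $2 K^{h(r(T_i^j))} \cdot l(T_i^j)$, chosen so that a single separating set---which by property~(iii) and planarity lies on at most three faces---cannot amass enough weight from shallower subtrees to match the dominant contribution from the deepest attached subtree.
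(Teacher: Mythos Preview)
Your argument matches the paper's for the multi-bag case: Claim~\ref{clm:connect} gives a connected subtree, property~(ii) of $A(T')$ identifies a unique highest bag $B_i$, and Lemma~\ref{lem:domwt} supplies $|S_{B_i}| > |S_{\mathrm{rest}}|$. But the paper's proof begins with a case distinction you omit: when all edges of $C$ are associated with a \emph{single} bag, it does not go through Lemma~\ref{lem:domwt} at all and instead invokes $w_1$ directly (the BTV09 planar weight, or the binary-power weight on a constant-size component). The lower bound $|S_{B_i}| \geq 2K^{h(r(T_i))-1}l(T_i)$ that you try to re-derive need not hold in this situation: a cycle lying entirely inside one planar component may enclose no face adjacent to any separating set, so the $w_2$-contribution is zero and only the (much smaller) $w_1$-term survives. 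Your own planar sub-case sketch tacitly assumes the multi-bag setting (``paths whose endpoints lie on separating sets at which $C$ descends into attached subtrees''), so the single-bag planar case falls through. That is precisely why the paper splits it off and handles it with $w_1$.

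The material in your third and fourth paragraphs is really an attempt to justify the lower-bound assertion that appears inside the proof of Lemma~\ref{lem:domwt}, not part of the proof of Lemma~\ref{lem:nonzerowt} itself; the paper simply cites Lemma~\ref{lem:domwt} as a black box here. Fleshing out that bound is worthwhile, but it belongs with Lemma~\ref{lem:domwt}.
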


\begin{proof}
If $C$ is contained within a component, i.e., its edges are associated with a single bag $B_i$, then we know that $w_1$ assigns nonzero circulation to $C$. Suppose the edges of $C$ are associated with more than one bag in $T'$. By Claim \ref{clm:connect}, we know that these bags form a connected component. By the (ii) property of $A(T')$, we know that there is a unique highest bag $B_i$ in $A(T')$ which have edges of $C$ associated with it. Therefore from Lemma \ref{lem:domwt} we know that the circulation of $C$ will be nonzero.
\end{proof}

\begin{proof}[Proof of Theorem \ref{thm:main}]
Proof of Theorem \ref{thm:main} follows from Lemma \ref{lem:wtforg} and \ref{lem:nonzerowt}.
\end{proof}
\section{Maximum Matching}\label{sec:staticmatch}
In this section, we consider the complexity of the maximum matching problem in single crossing minor free graphs. Recently Datta et al.~\cite{DKKM18} have shown that the bipartite maximum matching can be solved in \SPL\ in the planar, bounded genus, $K_{3,3}$-free and $K_{5}$-free graphs.

Their techniques can be extended to any graph class where nonzero circulation weights can be assigned in logspace. For constructing a maximum matching in $K_{3,3}$-free and $K_{5}$-free bipartite graphs, they use the logspace algorithm of~\cite{AGGT16} as a black box. Since from Theorem~\ref{thm:main} nonzero circulation weights can be computed for the more general class of any single crossing minor free graphs, we get the bipartite maximum matching result of Corollary~\ref{cor:stat}.

In a related work recently, Eppstein and Vazirani~\cite{EV21} have shown an $\NC$ algorithm for the case when the graph is not necessarily bipartite. However, the result holds only for constructing perfect matchings. In non-bipartite graphs, there is no known parallel (e.g., \NC) or space-efficient algorithm for deterministically constructing a maximum matching even in the case of planar graphs~\cite{Sankowski18, DKKM18}. Datta et al.~\cite{DKKM18} givelo an approach to design a \emph{pseudo-deterministic} \NC\ algorithm for this problem. Pseudo-deterministic algorithms are probabilistic algorithms for search problems that produce a unique output for each given input with high probability. That is, they return the same output for all but a few of the possible random choices. We call an algorithm pseudo-deterministic $\NC$ if it runs in $\RNC$ and is pseudo-deterministic.

Using the Gallai-Edmonds decomposition theorem, \cite{DKKM18} shows that the search version of the maximum matching problem reduces to determining the size of the maximum matching in the presence of algorithms to (a) find a perfect matching and to (b) solve the bipartite version of the maximum matching, all in the same class of graphs. This reduction implies a pseudo-deterministic $\NC$ algorithm as we only need to use randomization for determining the size of the matching, which always returns the same result. For single crossing minor free graphs, using the $\NC$ algorithm of \cite{EV21} for finding a perfect matching and our $\SPL$ algorithm for finding a maximum matching in bipartite graphs, we have the following result:
\begin{theorem}
Maximum matching in single-crossing minor free graphs (not necessarily bipartite) is in pseudo-deterministic \NC.
\end{theorem}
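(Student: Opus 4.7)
The plan is to follow the framework of Datta et al.~\cite{DKKM18}, plugging in our new ingredients for the single-crossing minor free case. The Gallai--Edmonds decomposition of a graph $G$ partitions $V(G)$ into three sets $D,A,C$, where $D$ is the set of vertices missed by some maximum matching, $A=N(D)\setminus D$, and $C=V\setminus(A\cup D)$. A maximum matching is obtained by combining: (i) a perfect matching of $G[C]$, (ii) a matching that saturates every vertex of $A$ to a distinct odd component of $G[D]$ (obtainable from a perfect matching of the bipartite contraction $H$ where each odd component of $G[D]$ collapses to a single vertex adjacent to its neighbours in $A$), and (iii) a near-perfect matching inside each odd component of $G[D]$. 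The reduction in \cite{DKKM18} shows that, modulo an algorithm for the size of a maximum matching and algorithms for perfect matching (in the same graph class) and bipartite maximum matching (again in the same class), the search version of maximum matching can be carried out in the same complexity, up to the randomness needed for the size computation.

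First, I would verify that each of the auxiliary graphs that appear stays within the class of single-crossing minor free graphs: the induced subgraphs $G[C]$, $G[D]$ and the odd components of $G[D]$ are trivially minor-closed within the class (induced subgraphs take minors), while the bipartite contraction $H$ used to match $A$ into the odd components of $D$ is itself obtained by contracting connected subgraphs of $G$, and is therefore a minor of $G$; hence $H$ is also single-crossing minor free. This closure is exactly what allows us to recurse the subroutines on each piece.

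Next, I would invoke the two black boxes: the Eppstein--Vazirani $\NC$ algorithm~\cite{EV21} for constructing a perfect matching in $G[C]$ and in each odd component of $G[D]$ (after removing the exposed vertex), and the $\SPL\subseteq\NC$ algorithm for bipartite maximum matching obtained from Corollary~\ref{cor:stat} to find a matching in $H$ that saturates $A$. Computing the Gallai--Edmonds decomposition itself reduces in $\NC$ to oracle calls for the size of a maximum matching on polynomially many induced subgraphs (one can determine membership in $D$, $A$, $C$ by testing how removal of a vertex or a small set changes the matching number). All of these size computations are carried out in $\RNC$ via the Mulmuley--Vazirani--Vazirani isolation lemma; they are the only randomized ingredient, and since the matching number is a deterministic integer, every successful run returns the same answer.

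The main obstacle, and the reason the output is only pseudo-deterministic rather than deterministic, is exactly that we have no deterministic $\NC$ (or even $\SPL$) procedure for computing the maximum matching size in a general non-bipartite single-crossing minor free graph; the nonzero-circulation weight function of Theorem~\ref{thm:main} isolates paths and bipartite matchings but does not directly isolate non-bipartite matchings. However, because each of the polynomially many size queries is itself pseudo-deterministic, a union bound over the calls shows that with high probability every query returns the correct (and hence unique) value, so the overall algorithm produces the same maximum matching on all but a negligible fraction of random choices. Combining these observations with the reduction of \cite{DKKM18} gives the claimed pseudo-deterministic $\NC$ bound.
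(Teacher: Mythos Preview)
Your proposal is correct and follows essentially the same approach as the paper: invoke the Gallai--Edmonds-based reduction of \cite{DKKM18}, plug in the \NC\ perfect-matching algorithm of \cite{EV21} and the \SPL\ bipartite maximum-matching algorithm from Corollary~\ref{cor:stat}, and observe that the only randomized ingredient is the computation of the matching size, yielding pseudo-determinism. You supply somewhat more detail than the paper does (in particular the explicit closure check that the auxiliary graphs remain single-crossing minor free), but the route is the same.
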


\section{Conclusion}
\label{sec:concl}
We have given a construction of a nonzero circulation weight function for the class of graphs that can be expressed as 3-clique-sums of planar and constant treewidth graphs. However, it seems that our technique can be extended to the class of graphs that can be expressed as 3-clique-sums of constant genus and constant treewidth graphs. Further extending our results to larger graph classes would require fundamentally new techniques. This is so because the most significant bottleneck in parallelizing matching algorithms for larger graph classes such as apex minor free graphs or $H$-minor free graphs for a finite $H$ is the absence of a parallel algorithm for the structural decomposition of such families. Thus we would need to
revisit the Robertson-Seymour graph minor theory to parallelize it. This paper thus serves the dual purpose of delineating the boundaries of the known regions of parallel (bipartite) matching and reachability and as an invitation to the vast unknown of parallelizing the Robertson-Seymour structure theorems.
\bibliography{references}
\appendix
\section{Appendix}
\label{sec:appendix}
\begin{proof}[Proof of Lemma \ref{lem:wtforg}]
To construct the weight function $w$, we associate a sequence $P(u,v)$ of edges of $G'$ with each edge $(u,v)$ of $G$. Assume that $T'$ is a rooted tree, and root is the highest node in the tree. The heights of all the other nodes are one less than that of their parent. As we mentioned that $T'$ is a tree decomposition of $G$. For an edge $(u,v)$, we know that there are unique highest bags $B_1$ and $B_2$ that contain vertices $u$ and $v$, respectively.
\begin{itemize}
\item If $B_1 = B_2$ then $P(u,v) = (u_{B_1} , v_{B_2})$.

\item If $B_1$ is an ancestor of $B_2$ then \\$P(u,v) = (u_{B_1}, u_{parent(\cdots (parent(B_2)))}), \ldots ,(u_{parent(B_2)}, u_{B_2}),(u_{B_2},v_{B_2})$.

\item If $B_1$ is a descendant of $B_2$ then \\ $P(u,v) = (u_{B_1}, v_{B_1}), (v_{B_1}, v_{parent(B_1)}), \ldots , (v_{parent(\cdots (parent(B_1)))},v_{B_2})$.
\end{itemize}

The weight function $w$ for the graph $G$ is defined as follows:
\begin{eqnarray*}
w(u,v) = \sum_{e\in P(u,v)} w'(e)
\end{eqnarray*}

For a simple cycle $C = e_1,e_2, \ldots ,e_j$ in $G$,  we define $P(C) = P(e_1),P(e_2), \ldots ,P(e_j)$. Note that $P(C)$ is a closed walk in $G'$. Let $E'_d(C)$ be the subset of edges of $G'$ such for all edges $e \in E'_d(C)$ both $e$ and $e^{r}$ appear in $P(C)$, where $e^r$ denotes the edge obtained by reversing the direction of $e$. We prove that if we remove the edges of $E'_d(C)$ from $P(C)$ then the remaining edges $P(C) - E'_d(C)$  form a simple cycle in $G'$. 

\begin{claim}
\label{clm:simp}
Edges in the set $P(C)- E'_d(C)$ form a simple cycle in $G'$. 
\end{claim}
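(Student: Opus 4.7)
The plan is to analyze $P(C)$ one vertex at a time, show that every cancellation in $P(C) - E'_d(C)$ lies within a single vertex's bag-subtree of $T'$, and then verify that what remains traces a simple cycle in $G'$.

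First I will record the shape of $P(e)$ for an oriented edge $e = (x, y) \in C$. By the three cases in the definition of $P$, the walk $P(e)$ descends via vertical $x$-edges from $B_x$ (the highest bag containing $x$) to $B^{(e)}$, defined as the deeper of $B_x$ and $B_y$ in $T'$, then crosses the single horizontal edge $(x_{B^{(e)}}, y_{B^{(e)}})$, and finally climbs via vertical $y$-edges up to $B_y$; either vertical stretch may be empty. Writing $C = (w_1, \ldots, w_j, w_1)$ and $B^i := B^{(e_i)}$, the only vertical edges in $P(e_{i-1}) \cup P(e_i)$ that lie in the bag-subtree of $w_i$ are an upward $w_i$-path from $B^{i-1}$ to $B_{w_i}$ (the tail of $P(e_{i-1})$) and a downward $w_i$-path from $B_{w_i}$ to $B^i$ (the head of $P(e_i)$). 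Since $C$ is simple, $w_i$ belongs to no other edge of $C$, so these two pieces account for every vertical $w_i$-edge in all of $P(C)$.

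Next I will identify $E'_d(C)$ explicitly. Let $B^\star$ denote the LCA of $B^{i-1}$ and $B^i$ inside the connected $w_i$-subtree of $T'$. The two paths above share exactly the stretch between $B^\star$ and $B_{w_i}$, traversed once upward and once downward, so each edge of this stretch appears in $P(C)$ in both orientations and therefore lands in $E'_d(C)$ together with its reverse; deleting them leaves the unique direct tree-path from $B^{i-1}$ to $B^i$ through $B^\star$. Horizontal edges contribute nothing to $E'_d(C)$, because $(x_B, y_B)$ appears in $P(C)$ only inside the $P$-expansion of the unique edge of $C$ with endpoints $\{x, y\}$, and $C$ is simple so no such edge is traversed twice.

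Finally I will assemble the residue and check simplicity. The remaining edges alternate the horizontal edge $((w_i)_{B^i}, (w_{i+1})_{B^i})$ with the direct $w_{i+1}$-tree-path from $B^i$ to $B^{i+1}$, forming a closed walk in $G'$. All vertices in the $w_i$-segment are copies of $w_i$; simplicity of $C$ makes the underlying vertices of distinct segments distinct, giving vertex-disjoint segments in $V(G')$; within each segment the traversal is a simple tree-path; and each junction vertex $(w_{i+1})_{B^i}$ appears in only one segment, as an endpoint. Hence the residue is a simple cycle. The main obstacle will be the cancellation bookkeeping in the second paragraph: pinning down that exactly the $B^\star$-to-$B_{w_i}$ edges are doubled (and in opposite orientations) and ruling out cancellations between $P(e_i)$ and any $P(e_{i'})$ with $|i - i'| \geq 2$; both facts rest on the fact that every vertex of the simple cycle $C$ is incident to exactly two of its edges, so that vertical $v$-edges in $P(C)$ come solely from the two $P$-segments meeting at $v$.
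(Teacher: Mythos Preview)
Your proposal is correct and rests on the same key observation as the paper: the only repetitions in $P(C)$ are vertical $v$-edges arising from the overshoot $B^\star \to B_v \to B^\star$ at each cycle vertex $v$, where $B^\star$ is the LCA (in $T'$) of the two horizontal-edge bags adjacent to $v$. The paper establishes this iteratively---it walks along $P(C)$, locates the first repeated vertex, shows the repetition is exactly the $B^\star$-to-$B'$ overshoot for the relevant $v$, removes the two opposing copies, and repeats. You instead give a closed-form description of $E'_d(C)$ in one pass (all overshoot stretches, one per vertex of $C$) and then verify simplicity of the residue directly by noting that the remaining walk decomposes into the $j$ horizontal edges interleaved with $j$ vertex-disjoint simple tree paths. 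The two arguments are equivalent in content; your global version is somewhat cleaner and avoids the paper's ``other cases can be handled similarly'' and the informal ``repeat this procedure'' at the end.
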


\begin{proof}
Note that the lemma follows trivially if $P(C)$ is a simple cycle. Therefore, assume that $P(C)$ is not a simple cycle. We start traversing the walk $P(C)$ starting from the edges of the sequence $P(e_1)$. Let $P(e_k)$ be the first place where a vertex in the walk $ P(e_1) P(e_2).....P(e_k)$ repeats, i.e., edges in the sequence  $P(e_1) P(e_2).....P(e_{k-1})$ form a simple path, but after adding the edges of  $P(e_k)$ some vertices are visited twice in the walk $P(e_1) P(e_2).....P(e_{k-1})P(e_k)$ for some $k \leq j$. This implies that some vertices are visited twice in the sequence $P(e_{k-1})P(e_k)$. Let $e_{k-1}=(u,v)$ and $e_k=(v,x)$. This implies that some copies of the vertex $v$ appear twice in the sequence $P(u,v)P(v,x)$. Let $B_1$ and $B_2$ be the highest bags such that $B_1$ contains the copies of vertices $u$ and $v$, and $B_2$ contains the copies of $v$ and $x$. Let bag $B$ be the lowest common ancestor of $B_1$ and $B_2$. We know that $B$ must contain a copy of the vertex $v$, i.e., $v_B$. Let $B'$ be the highest bag containing  a copy of vertex $v$, i.e., $v_{B'}$. First, consider the case when neither $B_1$ is an ancestor of $B_2$ and vice-versa, other cases can be handled similarly. In that case sequence  $P(u,v) = u_{B_1}v_{B_1}  v_{parent(B_1)} \ldots v_{B} \ldots v_{B'}$ and $P(v,w) = v_{B'} \ldots v_{B} \ldots v_{parent(B_2)} v_{B_2}w_{B_2}$. Note that in $P(u,v)$ a path goes from $v_B$ to $v_{B'}$ and the same path appear in reverse order from $v_{B'}$ to $v_{B}$ in the sequence $P(v,x)$. Therefore if we remove these two paths from $P(u,v)$ and $P(v,w)$ the remaining subsequence of the sequence $P(u,v)P(v,w)$ will be a simple path, i.e., no vertex will appear twice since $B$ is the lowest common ancestor of $B_1$ and $B_2$. Now repeat this procedure for $P_{k+1},P_{k_2} \ldots$ and so on till $P_{k}$. In the end, we will obtain a simple cycle.
\end{proof}

Since we assumed that the weight function $w'$ is skew-symmetric, we know that $w'(e) = -w (e^{r})$, for all $e \in G'$. This implies that $w'(E'_d(C)) = 0 $. Therefore $w(C) = w'(P(C)) = w'(P(C) - E'_d(C))$. From Claim \ref{clm:simp} we know that edges in the set $P(C) - E'_d(C)$ form a simple cycle and we assumed that $w'$ gives nonzero circulation to every simple cycle therefore, $w'(P(C) - E'_d(C)) \neq 0$. This implies that $w(C) \neq 0$. This finishes the proof of Lemma \ref{lem:wtforg}.
\end{proof}

\begin{proof}[Proof of Claim \ref{clm:connect}]
Note that if we treat each vertex in the bags of $T'$ distinctly, then there is a one-to-one correspondence between vertices of $G'$ and vertices in the bags of $T'$. Therefore, in $T'$ a vertex of $G'$ is identified by its corresponding vertex. Note that all the bags which contain vertices of the cycle $C$ form a connected component in $T'$. We will now prove that if a bag $B$ contains some vertices of $C$, then either $B$ has some edges of $C$ associate with it or no bag in the subtree rooted at $B$ has any edge of $C$ associated with it. From this, we can conclude that the bags which have some edges of $C$ associated with them form a connected component in $T'$.

Assume that $B$ is a bag which contains a vertex of $C$ but no edge of $C$ is associated with it. This implies that $C$  never enters in any of the children of $B$. Because, let us assume it enters to some child $B'$ of $B$ through some vertex $v_{B'}$ of $B'$. In that case, there will be an edge $(v_{B},v_{B'})$ of $C$ associated with the bag $B$, which is a contradiction. Therefore subtree rooted at $B$ will not have any edge of the cycle $C$ associated with it. This finishes the proof.
\end{proof}

\end{document}